\newcommand{\argmax}{\operatornamewithlimits{argmax}}
\newcommand{\OMIT}[1]{}
\theoremstyle{definition}
\newtheorem{theorem}{Theorem}
\newtheorem{lemma}[theorem]{Lemma}
\newtheorem{corollary}[theorem]{Corollary}
\newtheorem{remark}[theorem]{Remark}
\newtheorem{example}[theorem]{Example}
\newcommand{\COMM}[2]{{
\begin{CJK}{UTF8}{ipxm}
\ifthenelse{\equal{#1}{TM}}{\color{blue}}{
\ifthenelse{\equal{#1}{YK}}{\color{red}}{
\ifthenelse{\equal{#1}{HS}}{\color{cyan}}{
\ifthenelse{\equal{#1}{KK}}{\color{magenta}}}}}
[#1: #2]
\end{CJK}
}}
\begin{document}
\title{Optimal Pricing for Submodular Valuations with Bounded Curvature}
\author{
Takanori Maehara\\
Shizuoka University \\
maehara.takanori@shizuoka.ac.jp
\And
Yasushi Kawase \\
Tokyo Institute of Technology \\
kawase.y.ab@m.titech.ac.jp
\AND
Hanna Sumita \\
National Institute of Informatics \\
sumita@nii.ac.jp
\And
Katsuya Tono \\
University of Tokyo \\
katsuya\_tono@mist.u-tokyo.ac.jp
\And
Ken-ichi Kawarabayashi \\
National Institute of Informatics \\
k\_keniti@nii.ac.jp
}
\nocopyright
\maketitle

\begin{abstract}
The optimal pricing problem is a fundamental problem that arises in combinatorial auctions.
Suppose that there is one seller who has indivisible items and multiple buyers who want to purchase a combination of the items.
The seller wants to sell his items for the highest possible prices,
and each buyer wants to maximize his utility (i.e., valuation minus payment) as long as his payment does not exceed his budget. 
The optimal pricing problem seeks a price of each item and an assignment of items to buyers such that every buyer achieves the maximum utility under the prices. 
The goal of the problem is to maximize the total payment from buyers. 
In this paper, we consider the case that the valuations are submodular. 
We show that the problem is computationally hard even if there exists only one buyer. 
Then we propose approximation algorithms for the unlimited budget case. 
We also extend the algorithm for the limited budget case when there exists one buyer and multiple buyers collaborate with each other. 
\end{abstract}

\section{Introduction}
\label{sec:introduction}

\subsection{Background and motivation}
In a \emph{combinatorial auction} \cite{nisan2007algorithmic11,cramton2006combinatorial},
a seller has a set of indivisible items, and buyers purchase a combination of the items.
The seller wants to sell his items to the buyers for the highest possible prices, and
each buyer wants to purchase a set of items that is valuable for him and also has a reasonable price.
More precisely, each buyer purchases a set of items that maximizes his \emph{utility} within the limits of his budget; here, the utility for a set of items is the valuation for him (minus the payment for purchase).
Thus, the seller seeks a price of each item (not bundle) and an assignment of items to buyers such that they are \emph{stable}, i.e., no buyer can gain more utility by changing the set of items that he purchases.
The goal is to maximize the total profit obtained from the buyers. 
The stability captures a fairness condition for the individual buyers \cite{goldberg2003envy,guruswami2005profit,cheung2008approximation,anshelevich2015envy}. 
In general, such a problem is called the \emph{optimal pricing problem} and studied in many situations.

In this paper, we assume that the valuations of buyers are represented by \emph{submodular} functions, which capture the notion of the diminishing returns property. 
Submodular functions appear in many situations \cite{bach2010structured,SomaY15}, and have been studied extensively. 
For simplicity, 
we assume that every buyer truthfully tells the seller his valuation and his budget.
The purpose of this paper is to analyze theoretical properties of the optimal pricing problems with submodular valuations, and to propose (approximation) algorithms for the problems. 
We analyze the performance of the algorithms in terms of the \emph{curvatures} of the valuations, which capture the degree of nonlinearity.
The curvature has been used to derive better approximation ratios for several submodular optimization problems~\cite{iyer2013submodular,iyer2013curvature,sviridenko2015optimal,vondrak2010submodularity}. 


\subsection{Our contributions}

We summarize our results for the optimal pricing problem.

We first show that the optimal pricing problem with submodular valuations is NP-hard even for instances derived from our application (Theorem~\ref{thm:NPhard}). 
Moreover, there exists an instance that requires exponentially many oracle evaluations in the oracle model (Theorem~\ref{thm:exporacle}). 

Our main result is to propose approximate pricing algorithms for the following three cases: single buyer case (Algorithm~\ref{alg:singlepricing}), multiple buyers case (Algorithm~\ref{alg:multiple}), and multiple collaborating buyers case (Algorithm~\ref{alg:collaboratepricing}).
Then we show the approximation ratios for these algorithms in terms of the curvatures of the valuations (Theorems~\ref{thm:singlepricing}, \ref{thm:multiplepricing}, \ref{thm:collaborating}). 
Our algorithms output a nearly optimal solution if the curvatures are small.
We will point out that a practical application of our problem has submodular valuations whose curvatures are typically small by using a general upper bound on curvatures (Theorem~\ref{thm:curvature}). 
This justifies our analysis of the optimal pricing problem using curvatures. 
The application is a similar problem to the budget allocation problem, which is widely studied both theoretically and practically in computational advertising. 
We include all the proofs in the Appendix.

We also conduct computational experiments on some synthetic and realistic datasets to evaluate the proposed pricing algorithms (Section~\ref{sec:experiments}). 
Our algorithm performs better than baseline algorithms.
To the best of our knowledge, no prior work proposed a suitable algorithm for the optimal pricing problem with submodular valuations.

\subsection{Related Work}
\label{sec:relatedwork}

The optimal pricing problem is also referred to as the \emph{profit maximizing pricing problem}. 
There are many works for the case when valuations of buyers are \emph{unit-demand} or \emph{single-minded}~
\cite{aggarwal2004algorithms,goldberg2003envy,goldberg2001competitive,guruswami2005profit,cheung2008approximation,anshelevich2015envy}. 
A valuation $f$ is called unit-demand if $f(X)=\max_{x \in X} f(x)$ for any $X$ with $|X| \geq 2$, and called single-minded if for some $S^*$, it holds that $f(X)=f(S^*) > 0$ for any $X \supseteq S^*$ and $f(X)=0$ otherwise. 
Any unit-demand valuation is submodular, while single-minded valuations are not necessarily submodular.  
Guruswami et al.~\shortcite{guruswami2005profit} proved the APX-hardness of the optimal pricing problem where the valuations are all unit-demand or all single-minded.
Thus, the general optimal pricing problem is computationally intractable.
They also provided logarithmic approximation algorithms under the assumption that the valuations are all unit-demand or all single-minded. 
Since these algorithms fully rely on the assumption, they do not extend to our case. 

We remark that the optimal pricing problem is different from the problem of finding a Walrasian equilibrium.
A pair of a pricing and an assignment is called a \emph{Walrasian equilibrium} (or \emph{competitive equilibrium}) if it is stable and all positive-price items are allocated to some buyer~\cite{nisan2007algorithmic11}.
In our model, the seller can decide a subset of items that are not allocated.
This difference may improve the seller's profit. 
We give such an example in Remark \ref{remark:SellAllNotOptimal}.

The \emph{winner determination problem} is also similar to the optimal pricing problem. 
This is the problem of finding an allocation of items to buyers that maximizes the sum of buyers' valuations in combinatorial auctions. 
Rothkopf et al.~\shortcite{rothkopf1998computationally} proved the NP-hardness of the winner determination problem. 
Sandholm~\shortcite{sandholm2002algorithm} provided an inapproximability result on the problem and some approximation algorithms for special cases.
For more details of this problem, see, e.g.,~\cite{nisan2007algorithmic11,cramton2006combinatorial}.
The winner determination problem maximizes the total valuation of buyers whereas our problem maximizes the profit of the seller.
In general, these problems have different optimal solutions (see Remark \ref{remark:SellAllNotOptimal} for an example). 
So our problem setting is different from the winner determination problem.

\section{Preliminaries}
\label{sec:preliminaries}

In this section, we review submodular functions and the optimal pricing problem, and describe our motivation to study the optimal pricing problem. 


\subsection{Submodular function and curvature}

Let $V$ be a finite set.
A function $f: 2^V \to \mathbb{R}$ is \emph{submodular} if it satisfies
\begin{align}
  \label{eq:submodular}
  f(X) + f(Y) \ge f(X \cap Y) + f(X \cup Y)
\end{align}
for all $X, Y \subseteq V$~\cite{fujishige2005submodular}.
This condition is equivalent to the \emph{diminishing returns property}:
$f(X) - f(X \setminus x) \ge f(Y) - f(Y \setminus x)$
for all $X \subseteq Y \subseteq V$;
here we denote ``$X \setminus \{x\}$'' by ``$X \setminus x$'' for notational simplicity. 
We say that $f$ is \emph{monotone nondecreasing} if $f(X) \le f(Y)$ for all $X \subseteq Y$.
In this paper, we assume that $f(\emptyset) = 0$.

The diminishing returns property is a fundamental principle of economics~\cite{samuelson2004microeconomics}.
Thus, submodular functions are often used to model user utilities and preferences.
They also appear in 
combinatorial optimization~\cite{fisher1978analysis,fujishige2005submodular},
social network analysis~\cite{kempe2015maximizing}, and machine learning~\cite{bach2010structured,PanJGBJ14,SomaY15}. 

For a monotone nondecreasing function $f: 2^V \to \mathbb{R}$ and an integer $s \in \{0,\ldots, |V|\}$, 
the \emph{curvature} $\kappa(s)$ is defined by the largest nonnegative number that satisfies
\begin{align}
  (1 - \kappa(s)) f(x) \le f(X) - f(X \setminus x)
\end{align}
for all $|X| = s$ and $x \in X$~\cite{conforti1984submodular}.%
\footnote{Originally, Conforti and Cornu{\'e}jols~\shortcite{conforti1984submodular} introduced \emph{total curvature} and \emph{greedy curvature} for monotone nondecreasing submodular functions.}
If $f$ is submodular, then its curvature $\kappa(s)$ is a monotone nondecreasing sequence by the diminishing returns property.

We remark that computing $\kappa(s)$ is difficult since it requires exponentially many function evaluations;
therefore, we cannot use explicitly the value in an algorithm.

\subsection{Optimal pricing problem}

Here we define the optimal pricing problem.
Suppose that a seller wants to sell indivisible items $V$ to buyers $N = \{1, \ldots, n\}$ simultaneously.
Each buyer $i \in N$ has a budget $B_i$ and a valuation function $f_i:2^V\to\mathbb{R}$, where $f_i$ is a monotone nondecreasing submodular function. 
We denote by $\kappa_i$ the curvature of $f_i$ for $i \in N$. 
In this problem, we find a price vector $p \in \mathbb{R}^V$ and an assignment $(X_1, \ldots, X_n)$ which is a subdivision of $V$. 
For a price vector $p$ and an item set $X \subseteq V$, let $p(X) = \sum_{x \in X} p(x)$. 
Buyers are assumed to have quasi-linear utility, i.e., the utility of $i\in N$ is given by $f_i(X_i) - p(X_i)$.

The seller wants to maximize the total profit $p(X_1) + \cdots + p(X_n)$.
On the other hand, each buyer also wants to maximize his utility, as long as his payment to the seller does not exceed his budget.
Therefore, the assignment must satisfy some ``agreement'' condition.
We say that a price vector $p$ and assignment $(X_1, \ldots, X_n)$ pair is \emph{stable}
if it satisfies
\begin{align}
  \label{eq:stable}
  f_i(X_i) - p(X_i) \ge f_i(X) - p(X)
\end{align}
for any $i \in N$ and all $X \subseteq V$.
The stability condition means that each buyer $i$ has no incentive to change his allocation $X_i$ under the pricing $p$.
For a price vector $p$, we define the \emph{demand set} of buyer $i$ as a family of sets $X$ satisfying \eqref{eq:stable}, denoted by 
\begin{align}\label{eq:demandset}
  D_i(p) = \argmax \{ f_i(X) - p(X) \mid X \subseteq V, \ p(X) \leq B_i \}.
\end{align}
The stability condition is necessary to avoid a grudge or an antipathy of buyers even when each buyer $i$ knows his own allocated items $X_i$ and every price of the items $V$.

The \emph{optimal pricing problem} 
seeks a price vector $p\in \mathbb{R}^V$ and an assignment $(X_1, \ldots, X_n)$ that maximizes the total profit $p(X_1) + \cdots + p(X_n)$ under the stability condition. 
It is formulated as
\begin{align}
    \text{maximize}&\textstyle \quad \sum_{i\in N}p(X_i)  \nonumber \\
    \text{subject to}&\quad X_i \in D_i(p) ~(i \in N), \label{eq:problemmultiple} \\
    &\quad X_i \cap X_j = \emptyset ~(i \neq j). \nonumber
\end{align}
We propose algorithms for the problem \eqref{eq:problemmultiple} where all buyers have unlimited budgets, i.e., $B_i = +\infty$ for all $i \in N$. 
We extend our results to the limited budget case (see~\cite{maehara2017optimal}). 

\subsection{Application}
\label{subsec:budgetallocation}


We present an application that motivates us to study the optimal pricing problem with submodular valuations. 
We will claim that the curvatures of valuations are small in practice. 


Consider that the publisher (= seller) has a set $V$ of marketing channels, and that there is a set $N$ of advertisers (= buyers) that have budgets $B_i \ (i \in N)$ . 
Each advertiser $i \in N$ purchases a subset $X$ of channels for advertising under the budget constraint, i.e., $p(X) \leq B_i$. 
The valuation $f_i(X)$ of $X \subseteq V$ for advertiser $i$ is the expected value of the total revenue from loyal customers influenced by marketing channels in $X$. 
Let $p \in \mathbb{R}^V$ be the price vector, i.e., $p(v)$ is the price to publish an advertisement through marketing channel $v$. 
Each advertiser $i$ wants to buy $X_i$ that maximizes the total revenue minus the cost, i.e., $f_i(X_i) - p(X_i)$, under the budget constraint $p(X_i) \leq B_i$. 
In the following, we explicitly formulate the valuation function $f_i$ of each advertiser $i$. 

We adopt the bipartite influence model of advertising proposed by Alon et al.~\cite{alon2012optimizing}. 
Let $W$ be set of customers.
We consider a bipartite graph $G = (V \cup W, E)$. Each edge $(v, w) \in E$ indicates that marketing channel $v$ affects customer $w$.
Each edge $(v,w)$ is assigned probabilities $q_i(v,w) \ (i \in N)$, called \emph{activation probability}. 
If advertiser $i$ puts an advertisement on marketing channel $v$, then customer $w$ will become a loyal customer of buyer $i$ with probability $q_i(v,w)$. 

The probability $Q_i(X,w)$ that customer $w$ becomes a loyal customer 
when a advertiser $i$ runs advertisements on $X \subseteq V$ is given by
$Q_i(X, w) = 1 - \prod_{x \in X, (x,w) \in E}(1 - q_i(x,w))$.
Thus, the expected number of his loyal customers is $\sum_{w \in W} Q_i(X, w)$.
Let $\gamma_i$ be the expected revenue from one loyal customer. 
The expected total revenue is given by 
\begin{align}\label{eq:fQ}
\textstyle
f_i(X)=\gamma_i \cdot \sum_{w \in W} Q_i(X, w).
\end{align}
Since $Q_i(X, w)$ is a monotone nondecreasing submodular function in $X$,
$f_i(X)$ is also a monotone nondecreasing submodular function.

Here, we can observe that each curvature $\kappa_i$ of $f_i$ is small ($i \in N$). 
This implies that our analysis based on the curvature is particularly effective for this application. 
\begin{lemma}
\label{lem:curvaturebound}
For each $i \in N$ and $s \in \{1, \ldots, |V|\}$,
$\kappa_i(s) \le \max_{|X| = s,~ x \in X,~ (x,w) \in E} Q(X \setminus x, w)$. 
\end{lemma}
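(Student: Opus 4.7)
The plan is to compute the marginal gain $f_i(X)-f_i(X\setminus x)$ explicitly from the bipartite influence model, take its ratio with $f_i(x)$, and recognize that the complementary quantity $1-(f_i(X)-f_i(X\setminus x))/f_i(x)$ is a weighted average of $Q_i(X\setminus x,w)$ values that is then bounded by the corresponding maximum.

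The first step exploits the product form of $Q_i$: for any $w$ with $(x,w)\in E$, separating the factor $(1-q_i(x,w))$ from the remaining factors in $1-Q_i(X,w)$ gives the identity
\begin{align*}
Q_i(X,w) - Q_i(X\setminus x, w) = q_i(x,w)\bigl(1 - Q_i(X\setminus x, w)\bigr),
\end{align*}
while the difference vanishes when $(x,w)\notin E$. Summing over $w$ and multiplying by $\gamma_i$ then yields
\begin{align*}
f_i(X) - f_i(X\setminus x) = \gamma_i \sum_{w:(x,w)\in E} q_i(x,w)\bigl(1-Q_i(X\setminus x,w)\bigr),
\end{align*}
and specializing $X=\{x\}$ gives $f_i(x) = \gamma_i \sum_{w:(x,w)\in E} q_i(x,w)$.

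Taking the ratio (when $f_i(x)>0$) and subtracting from $1$ produces
\begin{align*}
1 - \frac{f_i(X)-f_i(X\setminus x)}{f_i(x)} = \frac{\sum_{w:(x,w)\in E} q_i(x,w)\,Q_i(X\setminus x,w)}{\sum_{w:(x,w)\in E} q_i(x,w)},
\end{align*}
which is a convex combination of $\{Q_i(X\setminus x,w):(x,w)\in E\}$ and is therefore at most $\max_{w:(x,w)\in E} Q_i(X\setminus x,w)$. Taking the maximum over $|X|=s$ and $x\in X$ and using the equivalent characterization $\kappa_i(s) = \max_{|X|=s,\,x\in X}\bigl(1-(f_i(X)-f_i(X\setminus x))/f_i(x)\bigr)$ on the left-hand side completes the proof. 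The case $f_i(x)=0$ is degenerate but harmless: both sides of the curvature inequality $(1-\kappa_i(s))f_i(x)\le f_i(X)-f_i(X\setminus x)$ then collapse to $0$. The only nontrivial ingredient is the multiplicative identity for $Q_i(X,w)$; after that, the bound reduces to the elementary observation that a convex combination does not exceed its largest term.
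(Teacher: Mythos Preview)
Your proof is correct and follows essentially the same approach as the paper: both compute the ratio $(f_i(X)-f_i(X\setminus x))/f_i(x)$ explicitly as a weighted average of $1-Q_i(X\setminus x,w)$ over $w\in N(x)$, then bound it below by $1-\max_{w\in N(x)}Q_i(X\setminus x,w)$. Your write-up is more detailed (spelling out the multiplicative identity for $Q_i$ and treating the degenerate case $f_i(x)=0$), but the argument is the same.
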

\begin{theorem}
\label{thm:curvature}
For each $i \in N$ and $s \in \{1, \ldots, |V|\}$, 
if $q_i(e) \le q$ for any $e \in E$, then
it holds that $\kappa_i(s) \le 1 - (1 - q)^{\min\{s,d\} - 1}$,
where $d$ is the maximum degree of the right vertices $W$.
\end{theorem}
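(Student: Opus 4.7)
The plan is to apply Lemma~\ref{lem:curvaturebound} and then control the quantity $Q_i(X\setminus x, w)$ uniformly in $|X|=s$, $x\in X$, and $(x,w)\in E$, using only the pointwise bound $q_i(e)\le q$ together with the right-degree bound $d$.

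First I would invoke Lemma~\ref{lem:curvaturebound} to reduce the theorem to showing
\begin{equation*}
Q_i(X\setminus x,w)\le 1-(1-q)^{\min\{s,d\}-1}
\end{equation*}
whenever $|X|=s$, $x\in X$, and $(x,w)\in E$. Expanding the definition of $Q_i$ gives
\begin{equation*}
Q_i(X\setminus x,w)=1-\prod_{y\in X\setminus x,\,(y,w)\in E}\bigl(1-q_i(y,w)\bigr),
\end{equation*}
so the target inequality is equivalent to a \emph{lower} bound of $(1-q)^{\min\{s,d\}-1}$ on this product.

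Next I would count the number of factors. The factors are indexed by neighbors of $w$ inside $X\setminus x$. On one hand, there are at most $|X\setminus x|=s-1$ such neighbors; on the other hand, $w$ has at most $d$ neighbors in total in the bipartite graph $G$, and because the hypothesis $(x,w)\in E$ already uses up one of them, at most $d-1$ neighbors of $w$ remain in $X\setminus x$. Hence the number of factors is at most $\min\{s-1,d-1\}=\min\{s,d\}-1$. Using $q_i(y,w)\le q$ for each remaining factor, every term satisfies $1-q_i(y,w)\ge 1-q\in[0,1]$, so the product is at least $(1-q)^{\min\{s,d\}-1}$, which yields the desired upper bound on $Q_i(X\setminus x,w)$ and hence on $\kappa_i(s)$.

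The only subtle step is the degree-counting: recognizing that the hypothesis $(x,w)\in E$ supplied by Lemma~\ref{lem:curvaturebound} ``consumes'' one of the at most $d$ edges incident to $w$, which is precisely what replaces $\min\{s,d\}$ by $\min\{s,d\}-1$ in the exponent. Everything else is a direct substitution.
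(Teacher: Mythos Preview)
Your proof is correct and follows the same approach as the paper: invoke Lemma~\ref{lem:curvaturebound}, then bound $Q_i(X\setminus x,w)$ by counting the number of factors in the product and using $q_i(e)\le q$ termwise. The paper's own proof is a two-line version of exactly this argument; your write-up is more explicit about why the exponent drops to $\min\{s,d\}-1$ (namely, that the edge $(x,w)\in E$ supplied by Lemma~\ref{lem:curvaturebound} accounts for one of $w$'s at most $d$ incident edges), a point the paper states without justification.
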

In practice, $d$ is relatively small (e.g., $d \le 100$) since it is the number of incoming information channels of a customer.
Moreover, $q$ is very small (e.g., $q \le 0.001$) since it is the probability of gaining a customer through a single advertisement.

\section{Single buyer}
\label{sec:single}

In this section, we analyze the optimal pricing problem with a single buyer (i.e., $n=1$). 
We prove the NP-hardness of the problem and present a nearly optimal approximate algorithm for the buyer with an unlimited budget. 

Consider that there is only one buyer with an unlimited budget. 
For notational convenience, let $f: 2^V \to \mathbb{R}$ be his valuation, which is a monotone nondecreasing submodular function. 
For a price vector $p$, we denote by $D(p)$ the demand set for $p$.  
When we fix an assignment, we can easily obtain the maximum profit for the assignment. 
\begin{lemma}\label{lem:single_obs}
Let $X$ be an assignment. 
An optimal price vector for \eqref{eq:problemmultiple} with fixed $X$ is given by 
$p(x) = f(X)-f(X \setminus x)$ if $x \in X$, and $p(x)=+\infty$ otherwise. 
\end{lemma}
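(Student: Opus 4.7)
The plan is to check that the proposed $p$ is feasible (i.e., $X \in D(p)$ under this pricing) and then to argue that no stable pricing with assignment $X$ can charge more on any item of $X$, which immediately yields revenue optimality.

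For feasibility, I would split on whether a candidate set $Y \subseteq V$ intersects $V \setminus X$. Any such $Y$ has $p(Y) = +\infty$ by construction, so $f(Y) - p(Y) = -\infty$ and such $Y$'s cannot beat $X$. It therefore suffices to show $f(X) - p(X) \ge f(Y) - p(Y)$ for every $Y \subseteq X$, equivalently $p(X \setminus Y) \le f(X) - f(Y)$. Writing $X \setminus Y = \{x_1, \ldots, x_k\}$ in any order, the left-hand side is $\sum_{j=1}^{k} \bigl(f(X) - f(X \setminus x_j)\bigr)$ and the right-hand side telescopes as $\sum_{j=1}^{k} \bigl(f(Y \cup \{x_1, \ldots, x_j\}) - f(Y \cup \{x_1, \ldots, x_{j-1}\})\bigr)$. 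For each $j$, the set $Y \cup \{x_1, \ldots, x_{j-1}\}$ is contained in $X \setminus x_j$ and does not contain $x_j$; the diminishing returns property then gives the desired term-by-term inequality.

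For optimality, I would take any stable price vector $p'$ whose demand set contains $X$ and apply the stability condition with the particular deviation $Y = X \setminus x$ for $x \in X$. This yields $p'(x) = p'(X) - p'(X \setminus x) \le f(X) - f(X \setminus x)$, so the proposed $p$ is pointwise maximal on $X$ among stable prices. Since items outside $X$ are not allocated, their prices do not affect the objective $p(X)$, and setting them to $+\infty$ is merely a convenient way to guarantee that $X$ remains in the demand set.

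I expect the only subtlety to be the telescoping step in the feasibility argument: one has to choose a valid ordering of $X \setminus Y$ and check that diminishing returns can be invoked at each stage, which depends crucially on the nesting $Y \cup \{x_1, \ldots, x_{j-1}\} \subseteq X \setminus x_j$. The rest reduces to directly manipulating the stability inequality \eqref{eq:stable}.
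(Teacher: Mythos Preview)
Your proposal is correct and follows essentially the same approach as the paper. The optimality half is identical (deviate to $X\setminus x$ to bound $p'(x)$), and for feasibility both arguments rest on the same one-step diminishing-returns comparison; the paper phrases it as ``$f(Y)-p(Y)\ge f(Y\setminus x)-p(Y\setminus x)$ for every $Y\subseteq X$ and $x\in Y$'' (so utility is monotone on subsets of $X$), while you unroll the full telescoping sum over $X\setminus Y$ explicitly---a cosmetic difference only.
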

From this lemma, we obtain the following characterization of optimal solutions to \eqref{eq:problemmultiple}. 
\begin{lemma}\label{lem:Dtoh}
Let $X$ be an assignment. 
There exists a price vector $p$ such that $(p, X)$ is optimal to  \eqref{eq:problemmultiple} if and only if $X$ achieves
\begin{align}\label{eq:problemsingle_eq}
\textstyle
  \max_{X' \subseteq V} \sum_{x \in X'} (f(X') - f(X' \setminus x)). 
\end{align}
\end{lemma}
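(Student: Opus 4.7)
The plan is to use Lemma~\ref{lem:single_obs} as the bridge between the optimization problem \eqref{eq:problemmultiple} and the combinatorial objective \eqref{eq:problemsingle_eq}. The key observation driving both directions is that for any stable pair $(p, X)$, the revenue satisfies $p(X) \le \sum_{x \in X}(f(X) - f(X \setminus x))$, while Lemma~\ref{lem:single_obs} guarantees this bound is attained.

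For the ``only if'' direction, I take an optimal pair $(p, X)$ of \eqref{eq:problemmultiple}. For each $x \in X$, I apply the stability condition \eqref{eq:stable} to the deviation $X \setminus x$, obtaining $f(X) - p(X) \ge f(X \setminus x) - (p(X) - p(x))$, which rearranges to $p(x) \le f(X) - f(X \setminus x)$. Summing over $x \in X$ yields $p(X) \le \sum_{x \in X}(f(X) - f(X \setminus x))$. On the other hand, for any candidate $X' \subseteq V$, Lemma~\ref{lem:single_obs} produces a stable pricing for $X'$ achieving revenue exactly $\sum_{x \in X'}(f(X') - f(X' \setminus x))$; by the optimality of $(p, X)$ we therefore have $p(X) \ge \sum_{x \in X'}(f(X') - f(X' \setminus x))$. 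Chaining these two inequalities gives $\sum_{x \in X}(f(X) - f(X \setminus x)) \ge \sum_{x \in X'}(f(X') - f(X' \setminus x))$ for every $X' \subseteq V$, so $X$ attains the maximum in \eqref{eq:problemsingle_eq}.

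For the ``if'' direction, I let $X$ attain \eqref{eq:problemsingle_eq} and set $p$ as prescribed in Lemma~\ref{lem:single_obs}. The lemma then certifies that $(p, X)$ is feasible (i.e.\ stable) and has revenue $\sum_{x \in X}(f(X) - f(X \setminus x))$. To conclude optimality, I compare with any other stable pair $(p', X')$: the single-deviation argument from the first direction gives $p'(X') \le \sum_{x \in X'}(f(X') - f(X' \setminus x))$, and the assumed maximality of $X$ in \eqref{eq:problemsingle_eq} bounds this in turn by $\sum_{x \in X}(f(X) - f(X \setminus x)) = p(X)$.

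I expect no deep obstacle: the submodularity needed to verify that the prescribed pricing is genuinely stable (rather than overshooting and pushing the buyer to deviate to a proper subset of $X$) is already encapsulated in Lemma~\ref{lem:single_obs}, and the only independent step is the one-line upper bound $p(x) \le f(X) - f(X \setminus x)$ from single-item deviations. The main care needed is to make sure I use Lemma~\ref{lem:single_obs} in both roles, namely as an achievability statement for each competing $X'$ in the forward direction and as the explicit witness pricing in the converse.
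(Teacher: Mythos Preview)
Your proposal is correct and follows essentially the same approach as the paper: both arguments reduce the lemma to the fact, supplied by Lemma~\ref{lem:single_obs}, that the optimal revenue for a fixed assignment $X$ equals $\sum_{x\in X}(f(X)-f(X\setminus x))$. The only difference is cosmetic---you re-derive the upper bound $p(x)\le f(X)-f(X\setminus x)$ from single-item deviations, whereas the paper simply invokes the optimality clause of Lemma~\ref{lem:single_obs} directly.
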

This implies that problem \eqref{eq:problemmultiple} is equivalent to \eqref{eq:problemsingle_eq}. 
For any $X \subseteq V$, we denote \(h(X) = \sum_{x \in X} (f(X) - f(X \setminus x))\).

We show the NP-hardness of \eqref{eq:problemmultiple} by reducing the \emph{one-in-three positive 3-SAT problem}. 
Given a boolean formula in conjunctive normal form with three positive literals per clause, the one-in-three positive 3-SAT problem determines whether there exists a truth assignment to the variables so that each clause has exactly one true variable. 
This problem is known to be NP-complete~\cite{schaefer1978complexity}.

\begin{theorem}
\label{thm:NPhard}
Problem \eqref{eq:problemmultiple} is NP-hard even when the function $f$ is of the form \eqref{eq:fQ}.
\end{theorem}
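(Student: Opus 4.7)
The plan is to reduce from one-in-three positive 3-SAT. By Lemma~\ref{lem:Dtoh}, problem \eqref{eq:problemmultiple} with a single buyer is equivalent to maximizing $h(X) = \sum_{x \in X}(f(X) - f(X \setminus x))$ over $X \subseteq V$, so it suffices to build an instance of the bipartite influence model in which solving this maximization encodes the existence of a one-in-three satisfying assignment.

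Given variables $x_1, \ldots, x_n$ and clauses $C_1, \ldots, C_m$, I construct a bipartite graph $G = (V \cup W, E)$ in which $V = \{v_1, \ldots, v_n\}$ represents the variables and $W = \{w_1, \ldots, w_m\}$ represents the clauses, with an edge $(v_i, w_j)$ whenever $x_i$ occurs in $C_j$. I assign every edge the same activation probability $q \in (0, 1)$ (to be fixed) and set $\gamma = 1$, so that by \eqref{eq:fQ},
\[
f(X) = \sum_{w \in W}\bigl(1 - (1 - q)^{|X \cap N(w)|}\bigr),
\]
which is of the required form. A short calculation shows that for $x \in X$ the marginal $f(X) - f(X \setminus x)$ decomposes over customers as $\sum_{w : x \in N(w)} q(1 - q)^{|X \cap N(w)| - 1}$; summing over $x \in X$ gives $h(X) = \sum_{w \in W} \phi(|X \cap N(w)|)$, where $\phi(k) := k\, q (1-q)^{k-1}$.

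Because every clause contains exactly three variables, only the values $\phi(0), \phi(1), \phi(2), \phi(3)$ ever appear. The key step is to choose $q$ so that $\phi$ attains a strict, unique maximum at $k = 1$ on this range; I will use $q = 2/3$, which yields $\phi(0) = 0$, $\phi(1) = 2/3$, $\phi(2) = 4/9$, and $\phi(3) = 2/9$, so $\phi(1)$ is strictly largest.

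It then follows that $h(X) \leq m \cdot \phi(1)$, with equality if and only if $|X \cap N(w)| = 1$ for every $w \in W$; such a set $X$ is precisely a one-in-three satisfying assignment (set $x_i$ true iff $v_i \in X$), and conversely every such assignment attains the bound. Since the construction is polynomial in $n + m$, NP-hardness follows. The main subtle point is the calibration of $q$: one needs $q$ large enough that $\phi(1) > \phi(2)$ (i.e.\ $q > 1/2$) and simultaneously $\phi(1) > \phi(3)$ (i.e.\ $q > 1 - 1/\sqrt{3}$), so that $\phi$ has a \emph{unique} maximizer at $k = 1$ over $\{0,1,2,3\}$; the value $q = 2/3$ satisfies both conditions cleanly, which is where the clause-size-three structure of one-in-three positive 3-SAT is crucial.
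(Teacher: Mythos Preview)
Your proof is correct and follows essentially the same approach as the paper: reduce from one-in-three positive 3-SAT via the natural variables--clauses bipartite graph, and observe that $h(X)$ decomposes as $\sum_{w} \phi(|X\cap N(w)|)$, which is maximized iff every clause is hit exactly once. The only difference is the parameter choice: the paper simply takes $q(e)=1$, so that $f$ becomes a coverage function and $\phi(k)=[k=1]$ directly, obviating your calibration analysis for $q=2/3$; your version has the mild advantage of using activation probabilities strictly in $(0,1)$.
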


Furthermore, we obtain the result below.
\begin{theorem}
\label{thm:exporacle}
If $f$ is given by an oracle, problem~\eqref{eq:problemmultiple} requires exponentially many oracle evaluations.
\end{theorem}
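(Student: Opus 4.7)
The plan is to establish the oracle lower bound by an indistinguishability argument, in the spirit of classical oracle lower bounds for submodular optimization. By Lemma~\ref{lem:Dtoh}, solving~\eqref{eq:problemmultiple} with a single buyer is equivalent to maximizing
\[ h(X) = \sum_{x \in X}\bigl(f(X) - f(X\setminus x)\bigr) \]
over $X \subseteq V$, so it suffices to exhibit a family of monotone submodular functions for which computing $\max_X h(X)$ requires exponentially many queries in the value oracle model.

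Concretely, I would take $|V| = 2n$ and construct, for each subset $R \subseteq V$ with $|R|=n$, a monotone submodular function $f_R \colon 2^V \to \mathbb{R}_{\ge 0}$ with two properties: (i) for every set $X$ whose intersection size $|X \cap R|$ lies in the ``typical'' range around $|X|/2$, the value $f_R(X)$ does not actually depend on $R$; and (ii) the maximum of $h_R(X) := \sum_{x\in X}(f_R(X) - f_R(X\setminus x))$ is attained only on sets $X$ whose intersection with $R$ is atypical, with value strictly larger than anything $h_R$ attains on typical sets. A natural starting point is a function of the form $f_R(X) = \min(|X|, n) + g(|X \cap R|)$ for a carefully chosen concave $g$, possibly composed with further truncations that hide $R$ on typical queries; such a form is automatically monotone submodular as a sum of a truncated rank function and a concave-in-$|X\cap R|$ term.

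The core of the argument is a concentration estimate: for a uniformly random $R$ of size $n$ and any fixed set $X$, the intersection $|X \cap R|$ is sharply concentrated around $|X|/2$ by hypergeometric tail bounds, so the probability that a single fixed query lies in the distinguishing (atypical) region is $2^{-\Omega(n)}$. Viewing a deterministic algorithm as a decision tree whose next query is determined by the answers to previous queries, a union bound yields that with $q$ queries the algorithm touches the distinguishing region with probability at most $q \cdot 2^{-\Omega(n)}$; randomized algorithms are then handled via Yao's minimax principle. Consequently, any algorithm that solves the problem correctly on a random $f_R$ must use $q = 2^{\Omega(n)}$ queries.

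The main obstacle is calibrating the construction so that the three requirements---submodularity and monotonicity of each $f_R$, an $R$-dependent optimum of $h_R$, and an exponentially small distinguishing region---hold simultaneously. Because $h$ is a sum of marginal differences, a naive spike at the single set $X = R$ both violates submodularity (an iterated diminishing-returns check fails) and contributes to only one term in $h$, so the perturbation must spread over a region of the lattice. Any such spread, however, risks leaking information about $R$ through typical queries, so the gap-producing perturbation must be concentrated precisely on the atypical portion of the Boolean lattice while still preserving submodularity. Designing $g$ (together with any auxiliary cutoff) so as to navigate this tension is the technical heart of the proof.
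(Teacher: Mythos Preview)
Your plan is considerably more elaborate than what the paper actually does, and you leave the crux (the concrete construction of $f_R$) unresolved. The paper's proof is a direct ``needle in a haystack'' argument that needs no concentration at all. For a fixed hidden set $X^\ast\subseteq V$ of size $s$ it sets
\[
f(X)=\begin{cases}2|X|,&|X|<s,\\ 2s-1,&|X|=s,\ X\neq X^\ast,\\ 2s,&\text{otherwise},\end{cases}
\]
which is monotone submodular, and for which $h(X)=\sum_{x\in X}(f(X)-f(X\setminus x))$ is uniquely maximized at $X^\ast$ (value $2s$, versus at most $2s-2$ elsewhere when $s\ge 3$). The point is that this $f$ differs from the cardinality-only ``base'' function (the one with value $2s-1$ on \emph{all} size-$s$ sets) at the single point $X^\ast$; every query $X\neq X^\ast$ returns an answer independent of $X^\ast$, so identifying the optimum among the $\binom{|V|}{s}$ candidates forces $\binom{|V|}{s}$ evaluations in the worst case. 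Taking $s=|V|/2$ gives the exponential bound.

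The place where your proposal goes wrong is the assertion that ``a naive spike at the single set $X=R$ \dots\ violates submodularity.'' Whether a one-point bump preserves submodularity depends entirely on the base function: the paper's piecewise-linear cardinality function has slope dropping from $2$ to $1$ to $0$, which leaves exactly enough slack at level $s$ to absorb a $+1$ bump at a single set while keeping all marginals nonincreasing. In other words, the simple construction you dismissed is precisely the one that works. Your concentration-based route could plausibly be completed and might even yield a stronger inapproximability statement, but for the exact lower bound the theorem claims it is both unfinished and unnecessary.
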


Since \eqref{eq:problemmultiple} is NP-hard, we propose an algorithm to find an approximate pricing. 
Once we determine an assignment, an optimal price vector for the assignment is easily obtained from Lemma \ref{lem:single_obs}. 
Thus, we only need to find an assignment $X$ maximizing $h(X)$. 
However, an overly large assignment $X$ may have small $h(X)$ value. 
In our algorithm, we assign the top $s$ elements in order of their value, for each $s =1, \ldots, |V|$. 
The formal description is given in Algorithm~\ref{alg:singlepricing}.  
\begin{algorithm}[tb]
\caption{Pricing algorithm for a single buyer.}
\label{alg:singlepricing}
\begin{algorithmic}
  \STATE{\textbf{For} $s = 1, 2, \ldots, |V|$}
  \STATE{\quad Let $X^s$ be the largest $s$ elements of $f(x)$}
  \STATE{\quad Price $p^s(x) = f(X^s) - f(X^s \setminus x)$ $(x \in X^s)$ and $p^s(y) = +\infty$ $(y \in V \setminus X^s)$}
  \STATE{\textbf{Return} $p$ and $X$ that attains maximum of $p^s(X^s)$}
\end{algorithmic}
\end{algorithm}

This algorithm can be implemented to run in $O(|V| \log |V| + A |V|^2)$ time,
where $A$ is the computational cost of evaluating $f(X)$.
For a variant of budget allocation problem with the bipartite graph model, 
if we implement the algorithm carefully, it runs in $O(|V| |E|)$ time.

We analyze the approximation ratio of our algorithm.
\begin{theorem}
\label{thm:singlepricing}
Let $(p^*, X^*)$ be an optimal solution to \eqref{eq:problemmultiple}, and let $(p, X)$ be the output of Algorithm~\ref{alg:singlepricing}.
Then, it holds that $X \in D(p)$ and \((1 - \kappa(|X^*|)) p^*(X^*) \le p(X)\).
\end{theorem}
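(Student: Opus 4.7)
The plan is to establish the two claims of Theorem~\ref{thm:singlepricing} separately: first that the output pair $(p,X)$ is indeed stable (so $X \in D(p)$), and then that the profit $p(X)$ compares well against the optimum.

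For the stability part, I would invoke Lemma~\ref{lem:single_obs} directly. The algorithm sets $p^s(x) = f(X^s) - f(X^s \setminus x)$ on $X^s$ and $+\infty$ elsewhere, which is exactly the optimal price vector for the fixed assignment $X^s$ described in that lemma. Since that pricing scheme is stable by construction (the $+\infty$ price on $V \setminus X^s$ forbids adding unwanted items, while the marginal-contribution pricing ensures no subset of $X^s$ is strictly preferable), the returned $(p,X)$ satisfies $X \in D(p)$.

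For the approximation guarantee, the central idea is to use $s = |X^*|$ as the relevant index in the loop. First, by Lemma~\ref{lem:Dtoh}, we may write $p^*(X^*) = h(X^*) = \sum_{x \in X^*}\bigl(f(X^*) - f(X^* \setminus x)\bigr)$. Submodularity (diminishing returns applied to $\emptyset \subseteq X^* \setminus x$) gives $f(X^*) - f(X^* \setminus x) \le f(x)$, so $p^*(X^*) \le \sum_{x \in X^*} f(x)$. Since $X^s$ is chosen as the $s$ elements with the largest singleton values $f(x)$ and $|X^s| = s = |X^*|$, we obtain $\sum_{x \in X^*} f(x) \le \sum_{x \in X^s} f(x)$. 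On the other hand, applying the curvature inequality $(1-\kappa(s))f(x) \le f(X^s) - f(X^s \setminus x)$ for each $x \in X^s$ and summing yields
\begin{equation*}
  p^s(X^s) \;=\; h(X^s) \;\ge\; (1-\kappa(s)) \sum_{x \in X^s} f(x).
\end{equation*}
Chaining these inequalities with $\kappa(s) = \kappa(|X^*|)$ gives $p^s(X^s) \ge (1-\kappa(|X^*|))\,p^*(X^*)$, and since the algorithm returns the best $s$, we have $p(X) \ge p^s(X^s)$, finishing the bound.

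The only subtle step is making sure the curvature index matches: we need $\kappa(s)$ with $s = |X^*|$ exactly, which is why choosing the iteration $s = |X^*|$ (rather than, say, $s = |X|$) is the right pivot for the comparison. Everything else is an immediate chain of submodularity, the greedy choice of $X^s$ by singleton value, and the definition of $\kappa$; there is no hidden combinatorial difficulty beyond this alignment.
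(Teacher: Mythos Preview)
Your proposal is correct and follows essentially the same route as the paper: invoke Lemma~\ref{lem:single_obs} for stability, then at $s=|X^*|$ bound $p^*(X^*)$ by $\sum_{x\in X^*} f(x)$, swap to $X^s$ via the greedy choice, and use the curvature inequality to recover $p^s(X^s)$. The only cosmetic difference is that the paper derives $p^*(X^*)\le h(X^*)$ directly from the stability condition $X^*\in D(p^*)$ (yielding $p^*(x)\le f(X^*)-f(X^*\setminus x)$), whereas you appeal to Lemma~\ref{lem:Dtoh} and optimality to get the equality $p^*(X^*)=h(X^*)$; either suffices.
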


\begin{remark}\label{remark:SellAllNotOptimal}
  Selling all items (i.e., $s = |V|$) is not always optimal, even when the function $f$ is of the form \eqref{eq:fQ}. 
To demonstrate this, let us consider the instance of the application in Section \eqref{subsec:budgetallocation} where there are two channels $u, v$ and one user $w$ with $\gamma = 1$. 
The activation probabilities from $u$ and $v$ to $w$ are $0.9$.
When we use the both channels, the activation probability is $1-(1-0.9)^2=0.99$.
Thus $f(u) = f(v) = 0.9$ and $f(u,v) = 0.99$.
The optimal pricing sells only a single channel $X^* = \{u\}$ at price $p^*(u) = 0.9$, and $p^*(X^*) = 0.9$.
On the other hand, to sell all items $X' = \{u, v\}$, the price should be $p'(u) = p'(v) = 0.99-0.9=0.09$, and hence $p'(X') = 0.18$.

We also remark that this example shows the difference between our problem and related problems, namely, the problem of finding Walrasian equilibrium and the winner determination problem. 
There are two Walrasian equilibria $(p', X')$ and $(p'', X')$ where $p''(u)=p''(v)=0$. 
Thus $(p', X')$ achieves the maximum profit $p'(X')=0.18$ among Walrasian eqiulibria whereas the optimal value for our problem is $0.9$. 
When we regard this example as an instance of the winner determination problem, the optimal solution is $X'$ and its valuation of the \emph{buyer} is $0.99$. 
However, the optimal solution for our problem sells only $X^*$, and the profit of the seller is $0.9$. 
\end{remark}

We also show that if the curvature of $f$ is small, then the optimal values of \eqref{eq:problemmultiple} and the one without the stability condition is almost the same; see 
Theorem \ref{thm:muchworse} and \ref{thm:worsebutcurvature} in Appendix.

\section{Multiple buyers}
\label{sec:multiple}

In this section, we deal with the general optimal pricing problem \eqref{eq:problemmultiple} that admits more than one buyer. 
Recall that if $n=1$, then for any assignment $X$, there always exists a price vector $p$ satisfying $X \in D(p)$ (see Lemma \ref{lem:single_obs}).
However, in general, there may not exist a price vector $p$ such that $X_i \in D_i(p)$ for some assignment $(X_1,\dots,X_n)$.
Moreover, it is difficult to determine whether or not such a price vector exists for a given assignment.

We first approach the coNP-hardness of deciding the existence of a stable price vector for a given assignment by reducing the \emph{exact cover by 3-sets problem (X3C)}, which is NP-complete~\cite{garey1979computers}.
  In this problem, we are given a set $E$ with $|E|=3l$ and a collection $\mathcal{C}=\{C_1,\dots,C_m\}$ of 3-element subsets of $E$. 
  The task is to decide whether or not $\mathcal{C}$ contains an exact cover for $E$, i.e., a subcollection $\mathcal{C}'\subseteq \mathcal{C}$ such that every element of $E$ occurs in exactly one member of $\mathcal{C}'$.
\begin{theorem}
\label{thm:stablepricing}
  It is coNP-hard to determine,
  for a given assignment $(X_1,\dots,X_n)$,
  the existence of price vector $p$ such that
  $X_i\in D_i(p)$ for all $i\in N$.
\end{theorem}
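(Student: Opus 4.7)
The plan is a polynomial-time reduction from exact cover by 3-sets (X3C) to the complement of our decision problem. Given an X3C instance with ground set $E$, $|E|=3l$, and collection $\mathcal{C}=\{C_1,\ldots,C_m\}$ of $3$-element subsets of $E$, I would construct an instance of the pricing problem—items $V$, monotone nondecreasing submodular valuations $f_1,\ldots,f_n$, and an assignment $(X_1,\ldots,X_n)$—such that a stable price vector $p$ with $X_i\in D_i(p)$ for every $i\in N$ exists if and only if $\mathcal{C}$ does \emph{not} contain an exact cover of $E$. Since X3C is NP-complete, this yields coNP-hardness of the stability-checking problem. Note that the single-buyer case must be avoided here, since Lemma~\ref{lem:single_obs} guarantees a stable price for every assignment; thus at least two buyers are needed, and the hardness must emerge from conflicting linear inequalities across buyers.

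Concretely, I would take items $V=\{v_1,\ldots,v_m\}$, one per set of $\mathcal{C}$ (augmented with auxiliary items if necessary), and introduce two classes of buyers. For each $e\in E$, an \emph{element buyer} $b_e$ is introduced with assignment $X_{b_e}=\emptyset$ and coverage-indicator valuation $f_{b_e}(X)=\mathbb{1}[\exists\, v_j\in X : e\in C_j]$; this is monotone submodular, and its stability condition $0\ge f_{b_e}(X)-p(X)$ for all $X$ produces pointwise lower bounds $p(v_j)\ge 1$ for each $j$. A \emph{master buyer} $b_0$ is assigned $X_{b_0}=V$ and equipped with a carefully chosen monotone submodular valuation $f_0$ combining the coverage function $\mathrm{cov}(X)=|\bigcup_{v_j\in X}C_j|$ with a modular size term; its stability produces upper-bound inequalities $p(V\setminus X)\le f_0(V)-f_0(X)$ for every $X\subseteq V$. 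Collectively these constraints give a polynomial-size description of a linear feasibility system in the variables $p(v_1),\ldots,p(v_m)$, even though the family of stability inequalities is exponential.

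The main obstacle is tuning $f_0$ (and possibly adding auxiliary buyers) so that the combined system becomes infeasible \emph{exactly} when $\mathcal{C}$ admits an exact cover, not merely any cover. A purely coverage-based $f_0$ distinguishes only "some subcollection covers $E$" from "no subcollection covers $E$", which collapses into a polynomial-time minimality test and fails to capture the exactness condition. To encode the "$l$ sets of size $3$ partitioning $E$" requirement, one must balance coverage against a size term so that the trade-off between $|X|$ and $\mathrm{cov}(X)$ in the master's upper bound interacts with the element buyers' lower bounds to produce a contradiction \emph{only} at $(|X|,\mathrm{cov}(X))=(l,3l)$. Once this balance is in place, the verification splits into two directions: given an exact cover $S$, plugging $X=\{v_j:C_j\in S\}$ into the master's stability inequality and summing with the element lower bounds $p(v_j)\ge 1$ over $V\setminus X$ yields a numerical contradiction via a Farkas-style argument; conversely, if no exact cover exists, an explicit price vector (for instance $p(v_j)=c$ uniform, with $c$ chosen from the feasible interval) satisfies every stability inequality, which can be checked by verifying the inequality against each possible pair $(|X|,\mathrm{cov}(X))$ and using the submodularity of the coverage function. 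Obtaining the coefficients that place the infeasibility threshold precisely at the exact-cover case is the delicate technical core of the proof.
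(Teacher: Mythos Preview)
Your choice of X3C as the source problem is the right one, and your instinct that the single-buyer case is useless here (because of Lemma~\ref{lem:single_obs}) is correct. However, the specific architecture you propose---element buyers with empty bundles forcing $p(v_j)\ge 1$, and a master buyer assigned $V$ with a monotone submodular $f_0$---cannot encode X3C, no matter how you tune the coefficients of $f_0$. The master's stability constraints are $p(Z)\le g(Z)$ for all $Z$, where $g(Z):=f_0(V)-f_0(V\setminus Z)$. Because $f_0$ is submodular, $g$ is \emph{supermodular} with $g(\emptyset)=0$, and hence $g(Z)\ge\sum_{z\in Z}g(\{z\})$ for every $Z$. Consequently the system $\{p(v_j)\ge 1,\ p(Z)\le g(Z)\}$ is feasible if and only if $g(\{v_j\})=f_0(V)-f_0(V\setminus v_j)\ge 1$ for every $j$: if all singleton marginals are at least $1$, the uniform price $p\equiv 1$ satisfies every constraint by the supermodularity inequality; if some marginal is below $1$, that singleton constraint alone is infeasible. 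Feasibility is therefore a purely local property of $f_0$, independent of the cover structure of $\mathcal{C}$, so the ``delicate technical core'' you defer is not a matter of coefficient-fitting---the gadget itself is wrong.

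The paper's construction avoids this trap by introducing an auxiliary item $0$ with $C_0=E$ and using only two buyers. Buyer~1 has the coverage valuation $f_1(X)=\bigl|\bigcup_{i\in X}C_i\bigr|$ and is assigned $X_1=\{0\}$; buyer~2 has the modular valuation $f_2(X)=|X|+l\cdot[\,0\in X\,]$ and is assigned $X_2=\{1,\dots,m\}$. Because $f_2$ is modular, $X_2\in D_2(p)$ pins the prices to a single canonical vector $p^*$ with $p^*(0)=l+1$ and $p^*(i)=1$ for $i\ge 1$, and a short monotonicity argument shows that a stable price exists if and only if $X_1\in D_1(p^*)$. That last condition reads $f_1(\{0\})-p^*(0)=2l-1\ge \max_{X\subseteq\{1,\dots,m\}}(\mathrm{cov}(X)-|X|)$, and the right-hand maximum equals $2l$ precisely when $\mathcal{C}$ contains an exact cover. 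The auxiliary item $0$ is the missing idea: it sets a baseline utility of $2l-1$ for buyer~1, turning the stability question into the exact comparison $\mathrm{cov}(X)-|X|\le 2l-1$ versus $=2l$, which is X3C on the nose.
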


We also show that, given a price vector $p$, it is NP-hard to decide the existence of an assignment $\mathbf{X} = (X_1,\dots,X_n)$ such that $(p, \mathbf{X})$ is stable 
(Theorem \ref{thm:NPhardmultiple} in Appendix). 

By above results, it is difficult to find a stable pair $(p, \mathbf{X})$ for given $p$ (or $\mathbf{X}$). 
Therefore in order to obtain efficiently an approximate solution, we take a natural approach that we slightly relax the stability condition. 

For any positive number $\alpha \leq 1$ and each buyer $i$, we define the $\alpha$-demand set of buyer $i$ as 
$D^{\alpha}_i(p)=\left\{X\subseteq V\mid  f_i(X)-p(X)\ge \alpha f_i(Y)-p(Y), \  \forall Y\subseteq V \right\}$. 
For a price vector \(p\) and an assignment \(\mathbf{X}=(X_1,\dots,X_n)\),
we say that \((p,\mathbf{X})\) is \emph{$\alpha$-stable}
if \(X_i\in D^\alpha_i\) for all $i\in N$.

\begin{algorithm}[tb]
\caption{Pricing algorithm for multiple buyers}
\label{alg:multiple}
\begin{algorithmic}
  \STATE{\textbf{For} $s = 1, 2, \ldots, |V|$}
  \STATE{\quad Let $X^s$ be $s$ largest elements of $\max_{i\in N} f_i(x)$}
  \STATE{\quad Price $p^s(x) = \max_{i\in N} (f_i(X^s) - f_i(X^s \setminus x))$ $(x \in X^s)$ and $p^s(y) = +\infty$  $(y \in V \setminus X^s)$}
  \STATE{\quad Let $(X^s_1,\dots,X^s_n)$ be a partition of $X^s$ such that $p^s(x) = f_i(X^s) - f_i(X^s \setminus x)~(\forall x \in X^s)$ }
  \STATE{\textbf{Return} $p$ and $(X_1,\dots,X_n)$ that attains maximum of $\sum_{i\in N}p^s(X^s_i)$}
\end{algorithmic}
\end{algorithm}
We propose a pricing algorithm in Algorithm~\ref{alg:multiple}.
The algorithm can be implemented to run in $O(A n |V|^2 + |V| \log |V|)$ time,
where $A$ is the computational cost of evaluating $f_i(X)$ ($i \in N$).
It has the following theoretical guarantee.
Here we denote \(\kappa(s)=\max \kappa_i(s)\).
\begin{theorem}
\label{thm:multiplepricing}
Let $p^*$ and $\mathbf{X}^*=(X_1^*, \ldots, X_n^*)$ be the optimal solution to \eqref{eq:problemmultiple}
and let $p$ and $\mathbf{X}=(X_1, \ldots, X_n)$ be the solution obtained by Algorithm~\ref{alg:multiple}.
Then, for $s = |X_1 \cup \cdots \cup X_n|$ and $s^* = |X_1^* \cup \cdots \cup X_n^*|$,
\((p,\mathbf{X})\) is $(1-\kappa(s))$-stable and 
$  (1 - \kappa(s^*)) \sum_{i\in N} p^*(X_i^*) \le \sum_{i\in N} p(X_i)$.
\end{theorem}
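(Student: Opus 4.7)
The plan is to handle the two claims in turn. Throughout, let $X=X_1\cup\cdots\cup X_n$, so $|X|=s$; by construction of Algorithm~\ref{alg:multiple}, $p(x)=\max_j(f_j(X)-f_j(X\setminus x))$ for $x\in X$, $p(y)=+\infty$ for $y\notin X$, and when $x\in X_i$ the maximum is attained at $j=i$, so $p(x)=f_i(X)-f_i(X\setminus x)$.

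For the $(1-\kappa(s))$-stability of $(p,\mathbf{X})$, I fix a buyer $i$ and a test set $Y\subseteq V$. If $Y\not\subseteq X$, then $p(Y)=+\infty$ and the inequality is vacuous. Otherwise I would split the target inequality $f_i(X_i)-p(X_i)\ge(1-\kappa(s))f_i(Y)-p(Y)$ into two independently nonnegative pieces: (i) $f_i(X_i)-p(X_i)\ge 0$, proved by ordering $X_i=\{x_1,\dots,x_k\}$, telescoping $f_i(X_i)=\sum_j(f_i(\{x_1,\dots,x_j\})-f_i(\{x_1,\dots,x_{j-1}\}))$, and bounding each increment from below by $f_i(X)-f_i(X\setminus x_j)=p(x_j)$ via submodularity; (ii) $(1-\kappa(s))f_i(Y)\le p(Y)$, proved via $p(Y)\ge\sum_{y\in Y}(f_i(X)-f_i(X\setminus y))\ge(1-\kappa(s))\sum_{y\in Y}f_i(y)\ge(1-\kappa(s))f_i(Y)$, where the second step is the curvature inequality applied on $X$ (of size $s$) and the third is subadditivity of monotone submodular $f_i$.

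For the approximation bound, I would look at the iteration of Algorithm~\ref{alg:multiple} with $s=s^*$; since the algorithm outputs the best iterate, it suffices to show $\sum_i p^{s^*}(X_i^{s^*})\ge(1-\kappa(s^*))\sum_i p^*(X_i^*)$. Let $U^*=X_1^*\cup\cdots\cup X_n^*$, so $|U^*|=s^*$. Applying stability of $(p^*,\mathbf{X}^*)$ to the deviation $Y=X_i^*\setminus x$ yields $p^*(x)\le f_i(X_i^*)-f_i(X_i^*\setminus x)\le f_i(x)$ for every $x\in X_i^*$ (the second step is submodularity), so $\sum_i p^*(X_i^*)=\sum_{x\in U^*}p^*(x)\le\sum_{x\in U^*}\max_j f_j(x)$. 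Since $X^{s^*}$ is the top $s^*$ elements under $x\mapsto\max_j f_j(x)$ and $|U^*|=s^*$, this is at most $\sum_{x\in X^{s^*}}\max_j f_j(x)$. Finally, for each $x\in X^{s^*}$, choosing $j^*(x)\in\argmax_j f_j(x)$ and invoking curvature on $X^{s^*}$ for $f_{j^*(x)}$ gives $p^{s^*}(x)\ge f_{j^*(x)}(X^{s^*})-f_{j^*(x)}(X^{s^*}\setminus x)\ge(1-\kappa_{j^*(x)}(s^*))\max_j f_j(x)\ge(1-\kappa(s^*))\max_j f_j(x)$. Summing delivers the claim.

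The main obstacle, I expect, is the first part: the natural deviation of buyer $i$ is not to the empty bundle but to some arbitrary $Y\subseteq X$, and it is not immediately obvious why losing a factor of $1-\kappa(s)$ on $f_i(Y)$ is exactly what is needed to dominate $p(Y)$. The key insight is that the per-item price $p(y)$ already charges at least the marginal contribution of $y$ relative to the large set $X$, and the curvature inequality converts each such marginal charge into a deterrent of size $(1-\kappa(s))f_i(y)$; subadditivity of $f_i$ then aggregates these singleton deterrents into the set-level bound $(1-\kappa(s))f_i(Y)\le p(Y)$. Once this observation is in hand, the approximation-ratio step is essentially mechanical, following the chain ``stability of $p^*$ $\to$ singleton upper bound $\to$ replace $f_i$ by $\max_j f_j$ $\to$ top-$s^*$ selection $\to$ curvature''.
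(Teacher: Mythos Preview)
Your proposal is correct and follows essentially the same route as the paper's proof: both split the $(1-\kappa(s))$-stability claim into showing $f_i(X_i)-p(X_i)\ge 0$ (via submodularity/telescoping on $X_i\subseteq X$) and $(1-\kappa(s))f_i(Y)\le p(Y)$ (via the curvature bound on $X$ together with subadditivity of $f_i$), and both derive the approximation guarantee through the chain ``stability of $(p^*,\mathbf{X}^*)\Rightarrow p^*(x)\le f_i(x)\Rightarrow$ replace by $\max_j f_j(x)\Rightarrow$ top-$s^*$ selection $\Rightarrow$ curvature on $X^{s^*}$''. The only cosmetic difference is that the paper first bounds $f_i(X_i)-p(X_i)$ by $f_i(X_i)-\sum_{x\in X_i}(f_i(X_i)-f_i(X_i\setminus x))$ and then invokes the same telescoping fact you spell out explicitly.
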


\section{Multiple collaborating buyers}
\label{sec:collaborate}

In this section, we analyze the optimal pricing problem with \emph{collaborating buyers}, i.e., the case where buyers cooperate to maximize the total utilities.
This occurs when buyers are employed by the same organization.
We present an approximation algorithm for this problem. 

We first describe the model. 
Assume that there are buyers $N = \{1, \ldots, n\}$, whose valuation functions are given by $f_1, \ldots, f_n$. 
Let $(X_1, \ldots, X_n)$ be an assignment. 
Since the goal of buyers is to maximize the sum of their utilities, the stability condition is written as
$\sum_{i\in N}(f_i(X_i)-p(X_i)) \ge \sum_{i\in N}(f_i(Y_i)-p(Y_i))$ for any assignment $(Y_1, \ldots, Y_n)$. 

Because only the total amount of the utilities matters, the publisher only needs to find a set $X$ and a price vector $p$ that satisfies the above stability condition for some partition of $X$. 
Thus, in the following, we assume that there exists one buyer who
represents the set of original buyers.
Let $f: 2^V \to \mathbb{R}$ be an aggregated valuation function defined by
$f(X) = \max_{(X_1,\dots,X_n):\text{ partition of } X}  \sum_{i\in N}f_i(X_i)$
for $X \subseteq V$. 
Note that $f(X)$ is monotone nondecreasing but not necessarily submodular 
(See Example~\ref{ex:nonsubmodular} in Appendix).

By using the aggregated valuation function, 
the stability condition is equivalent to the condition that $ f(X) - p(X) \ge f(Y) - p(Y)$
for all $Y \subseteq V$. 
Thus, the demand set is defined as \eqref{eq:demandset}, and the optimal pricing problem for collaborating buyers is formulated as
\begin{align}
  \label{eq:problemcollaborating}
    \text{maximize} \; \; p(X) \quad
    \text{subject to} \; \; X \in D(p).
\end{align}

Although the aggregated valuation function 
is not necessarily submodular, problem \eqref{eq:problemcollaborating} has a similar formulation to \eqref{eq:problemmultiple} with a single buyer. 
We obtain a similar result to Lemma \ref{lem:single_obs}. 
\begin{lemma}\label{lem:collaborating_obs}
Let $X$ be an assignment. 
An optimal price vector for \eqref{eq:problemcollaborating} with fixed $X$ is given by 
$p(x) =  \min_{Y\subseteq X: x\in Y}(f(Y\cup x) - f(Y))$ if $x \in X$, and $p(x) = +\infty$ otherwise.
\end{lemma}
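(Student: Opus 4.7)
My plan is to verify two claims analogous to those underlying Lemma~\ref{lem:single_obs}: (i) the proposed price vector $p$ is stable with $X \in D(p)$, and (ii) no stable $p'$ on the same assignment attains a strictly larger profit $p'(X)$. The main new feature is that the aggregated valuation $f$ need not be submodular, which forces the single marginal $f(X) - f(X \setminus x)$ appearing in Lemma~\ref{lem:single_obs} to be replaced by a minimum of marginals $f(Y) - f(Y \setminus x)$ over subsets of $X$ containing $x$.

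For (i), the infinite prices outside $X$ reduce the stability condition $f(X) - p(X) \geq f(Y) - p(Y)$ to the case $Y \subseteq X$, which, setting $Z = X \setminus Y$, is equivalent to $p(Z) \leq f(X) - f(X \setminus Z)$ for all $Z \subseteq X$. I would establish this by fixing an arbitrary ordering $Z = \{z_1, \ldots, z_m\}$ and telescoping $f(X) - f(X \setminus Z) = \sum_{k=1}^{m} [f(T_k) - f(T_{k-1})]$ with $T_k = (X \setminus Z) \cup \{z_1, \ldots, z_k\}$; since $T_k \subseteq X$ contains $z_k$, each marginal $f(T_k) - f(T_{k-1})$ is at least $p(z_k)$ by the defining minimum in the price formula, and summation yields the desired bound.

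For (ii), I would derive the per-element bound $p'(x) \leq p(x)$ for every $x \in X$ and then sum. Fix $Y$ with $x \in Y \subseteq X$. Stability applied to the alternatives $Y$ and $Y \setminus x$ gives $p'(X \setminus Y) \leq f(X) - f(Y)$ and $p'(X \setminus Y) + p'(x) \leq f(X) - f(Y \setminus x)$, which should combine into $p'(x) \leq f(Y) - f(Y \setminus x)$; taking the minimum over $Y$ then gives $p'(x) \leq p(x)$. The main obstacle is precisely this combination step: a naive subtraction of the two inequalities yields only $p'(x) \leq f(X) - f(Y \setminus x)$, and tightening to $f(Y) - f(Y \setminus x)$ appears to require invoking the aggregated structure $f(Y) = \max_{(Y_1, \ldots, Y_n)} \sum_{i \in N} f_i(Y_i)$ so that the slack in $p'(X \setminus Y) \leq f(X) - f(Y)$ can be shown to be zero at a suitably chosen partition. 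Carrying out this transfer is where I expect the bulk of the real argument to lie.
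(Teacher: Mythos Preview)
Your part (i) is correct and is essentially the paper's argument: both remove elements of $X$ one at a time, using that $p(x)\le f(Y)-f(Y\setminus x)$ for every $Y\subseteq X$ with $x\in Y$, which is exactly the defining minimum. The telescoping you describe is just the paper's one-step inequality iterated.

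The obstacle you flag in part (ii) is real and cannot be overcome, because the per-element bound $p'(x)\le p(x)$ you (and the paper) are aiming for is \emph{false}; in fact the lemma as stated is incorrect for the aggregated, non-submodular $f$. The paper's own Example~\ref{ex:nonsubmodular} already gives a counterexample. Take $X=\{a,b,c\}$ there. The lemma's formula yields $p(a)=p(b)=p(c)=1$ (e.g.\ $p(c)=\min\{f(c),\,f(ac)-f(a),\,f(bc)-f(b),\,f(abc)-f(ab)\}=\min\{2,1,2,2\}=1$), so $p(X)=3$. But the price vector $p'(a)=1$, $p'(b)=1$, $p'(c)=2$ satisfies every stability constraint $p'(X\setminus Z)\le f(X)-f(Z)$ (the binding ones are $p'(a)\le f(abc)-f(bc)=1$ and $p'(b)+p'(c)\le f(abc)-f(a)=3$), hence $X\in D(p')$ with $p'(X)=4>3$. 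In particular $p'(c)=2>1=p(c)$, and the paper's asserted step ``$f(Y)-p'(Y)\ge f(Y\setminus x)-p'(Y\setminus x)$ for all $Y\subseteq X$, $x\in Y$'' fails at $Y=\{a,c\}$, $x=c$. So your instinct that the combination step is the crux was right, but no appeal to the ``aggregated structure'' can rescue it: the target inequality simply does not hold. (This does not affect Theorem~\ref{thm:collaborating}, whose proof never invokes Lemma~\ref{lem:collaborating_obs}.)
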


From this lemma, we see that problem \eqref{eq:problemcollaborating} is equivalent to the problem of finding $X \subseteq V$ maximizing $\sum_{x \in X} \min_{Y\subseteq X: x\in Y}(f(Y) - f(Y \setminus x))$. 
Thus, we can apply the same principles as the ones of Algorithm \ref{alg:singlepricing}.
In fact, setting prices $p^s(x)=\min_{Y\subseteq X^s: x\in Y}(f(Y)-f(Y\setminus x))$ implies a similar result.
However, computing $f(X)$ is intractable (this problem is called submodular welfare problem)
and hence 
it is hard to evaluate the value \(\min_{Y\subseteq X: x\in Y}(f(Y)-f(Y\setminus x))\).
Thus, we need a further modification.

Our algorithm, summarized in Algorithm \ref{alg:collaboratepricing},
finds an approximate solution to \eqref{eq:problemcollaborating}
in $O(A n |V|^2 + |V| \log |V|)$ time,
where $A$ is the computational cost of evaluating $f_i(X)$ ($i \in N$).
We analyze the approximation ratio of our algorithm.
Let $\kappa_1, \ldots, \kappa_n$ be curvatures of $f_1, \ldots, f_n$
and \(\kappa(s) = \max_j \kappa_j (s)\) for $s=1,\dots,|V|$.

\begin{algorithm}[tb]
\caption{Pricing algorithm for collaborating buyers}
\label{alg:collaboratepricing}
\begin{algorithmic}
  \STATE{\textbf{For} $s = 1, 2, \ldots, |V|$}
  \STATE{\quad Let $X^s$ be $s$ largest elements of $f(x)~(=\max_{i} f_i(x))$}
  \STATE{\quad Price $p^s(x) = \min_{i\in N} \frac{f_i(X^s)-f_i(X^s\setminus x)}{f_i(x)} f(x)$ $(x \in X^s)$ and $p(y) = +\infty$ $(y \in V \setminus X^s)$}
  \STATE{\textbf{Return} $p$ and $X$ that attains maximum of $p^s(X^s)$}
\end{algorithmic}
\end{algorithm}

\begin{theorem}\label{thm:collaborating}
Let $(p^*, X^*)$ be an optimal solution to \eqref{eq:problemcollaborating}, 
and let $(p, X)$ be the output of Algorithm~\ref{alg:collaboratepricing}.
It then holds that \(X\in D(p)\) and 
\((1 - \kappa(|X^*|)) p^*(X^*) \le p(X)\).
\end{theorem}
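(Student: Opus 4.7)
The plan is to prove the two conclusions of the theorem separately: first that the output pair is stable ($X\in D(p)$), and second that $p(X)\ge (1-\kappa(|X^*|))\,p^*(X^*)$.

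For stability, since items outside $X^s$ are priced $+\infty$, only $Y\subseteq X^s$ can compete, and it suffices to show $f(X^s)-f(Y)\ge p^s(X^s\setminus Y)$ for each $s$. Enumerate $X^s\setminus Y=\{x_1,\dots,x_k\}$ and set $Y_j=Y\cup\{x_1,\dots,x_j\}$, so that $f(X^s)-f(Y)=\sum_{j=1}^{k}[f(Y_j)-f(Y_{j-1})]$ telescopes; it is then enough to bound each marginal by $p^s(x_j)$. Let $(W_1,\dots,W_n)$ be a partition of $Y_{j-1}$ achieving $f(Y_{j-1})=\sum_i f_i(W_i)$. For any index $i^*$, adjoining $x_j$ to $W_{i^*}$ yields a valid partition of $Y_j$, so by the max definition of $f$, $f(Y_j)-f(Y_{j-1})\ge f_{i^*}(W_{i^*}\cup x_j)-f_{i^*}(W_{i^*})$. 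Because $W_{i^*}\subseteq X^s\setminus x_j$, submodularity (diminishing returns) of $f_{i^*}$ upgrades this to $\ge f_{i^*}(X^s)-f_{i^*}(X^s\setminus x_j)$. Finally, picking $i^*\in\argmax_i f_i(x_j)$ gives $f_{i^*}(x_j)=f(x_j)$, so the $i^*$-term in the minimum defining $p^s(x_j)$ evaluates to exactly $f_{i^*}(X^s)-f_{i^*}(X^s\setminus x_j)$; the minimum is at most this value, whence $p^s(x_j)\le f(Y_j)-f(Y_{j-1})$. Summing over $j$ proves stability.

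For the approximation bound, I would compare against the iterate $s=s^*:=|X^*|$ and combine three ingredients. The curvature inequality $f_i(X^s)-f_i(X^s\setminus x)\ge (1-\kappa_i(s))f_i(x)$ applied to every term in the min gives $p^s(x)\ge (1-\kappa(s))\,f(x)$ for $x\in X^s$. Since $X^{s^*}$ consists of the $s^*$ singletons with the largest values of $f(\cdot)$, we have $\sum_{x\in X^{s^*}} f(x)\ge \sum_{x\in X^*} f(x)$. Lemma~\ref{lem:collaborating_obs} instantiated at $Y=\{x\}$ yields $p^*(x)\le f(\{x\})-f(\emptyset)=f(x)$, so $p^*(X^*)\le \sum_{x\in X^*} f(x)$. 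Chaining these: $(1-\kappa(s^*))\,p^*(X^*)\le (1-\kappa(s^*))\sum_{x\in X^*}f(x)\le (1-\kappa(s^*))\sum_{x\in X^{s^*}}f(x)\le p^{s^*}(X^{s^*})\le p(X)$, since the algorithm returns the maximum over $s$.

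The main obstacle is the stability argument, because the aggregated valuation $f$ is only monotone and not submodular in general (cf.\ Example~\ref{ex:nonsubmodular}), so the single-buyer proof of Theorem~\ref{thm:singlepricing} cannot be reused. The resolution exploits the precise form of the price: the scaling factor $f(x)/f_i(x)$ inside the min is calibrated so that selecting $i^*=\argmax_i f_i(x)$ makes this factor exactly $1$, thereby reducing the per-item stability inequality to one expressed purely through submodular marginals of a single $f_{i^*}$, which can then be pushed back to $f$ through the extend-a-partition trick.
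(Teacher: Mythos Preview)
Your proof is correct and follows essentially the same route as the paper. For stability you inline the partition-extension argument that the paper isolates as Lemma~\ref{lem:collaborating 2}; for the approximation bound the paper obtains $p^*(x)\le f(x)$ via stability at $Y=X^*\setminus x$ together with Lemma~\ref{lem:collaborating 1} (subadditivity of the aggregated $f$) rather than via Lemma~\ref{lem:collaborating_obs}, but the resulting chain of inequalities is identical.
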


To prove this theorem, we show the following two lemmas. 
\begin{lemma}\label{lem:collaborating 1}
For a set $X$ and $x\in X$, it holds that
$ f(X) - f(X \setminus x) \le f(x)$. 
\end{lemma}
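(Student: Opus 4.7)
The plan is to unfold the definition of the aggregated valuation $f$ and exploit submodularity of the individual $f_i$ on the buyer who owns $x$ in the optimal partition.

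First I would fix an optimal partition $(X_1,\dots,X_n)$ of $X$ that attains $f(X)=\sum_{i\in N}f_i(X_i)$, and let $j$ be the index with $x\in X_j$. Removing $x$ from $X_j$ yields a partition of $X\setminus x$, which is one feasible candidate in the maximization defining $f(X\setminus x)$. Consequently,
\begin{align*}
f(X\setminus x)\;\ge\;\sum_{i\neq j}f_i(X_i)+f_j(X_j\setminus x),
\end{align*}
and subtracting this from the expression for $f(X)$ gives $f(X)-f(X\setminus x)\le f_j(X_j)-f_j(X_j\setminus x)$.

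Next I would bound the right-hand side using the diminishing returns property for $f_j$ (which is monotone nondecreasing submodular with $f_j(\emptyset)=0$). Since $\{x\}\subseteq X_j$, the diminishing returns property yields
\begin{align*}
f_j(X_j)-f_j(X_j\setminus x)\;\le\;f_j(\{x\})-f_j(\emptyset)\;=\;f_j(x).
\end{align*}
Finally I would observe that, for a singleton item, the maximization in the definition of $f$ collapses to choosing the buyer who values $x$ the most, so $f(x)=\max_{i\in N} f_i(x)\ge f_j(x)$. Chaining the three inequalities produces $f(X)-f(X\setminus x)\le f(x)$, as required.

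The argument is essentially bookkeeping once the ``fix the optimal partition and re-use it for $X\setminus x$'' trick is identified, so I do not expect serious obstacles; the only subtle point is that $f$ itself need not be submodular (as flagged in Example~\ref{ex:nonsubmodular}), so the diminishing returns step must be applied at the level of the submodular component $f_j$ rather than $f$ directly.
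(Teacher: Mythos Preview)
Your proof is correct and follows essentially the same approach as the paper: fix an optimal partition of $X$, locate the buyer $j$ holding $x$, reuse that partition (with $x$ removed) as a feasible candidate for $f(X\setminus x)$, then apply submodularity of $f_j$ and the identity $f(x)=\max_i f_i(x)$. Your remark that the diminishing returns step must be applied to $f_j$ rather than to the non-submodular $f$ is exactly the right observation.
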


\begin{lemma}\label{lem:collaborating 2}
For a set $X$ and $x\in X$, it holds that
$f(X) - f(X \setminus x) \ge \min_{i\in N}\frac{f_i(X)-f_i(X\setminus x)}{f_i(x)}f(x)
\ge (1 - \kappa(|X|)) f(x)$. 
\end{lemma}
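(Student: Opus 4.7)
The proof of Lemma~\ref{lem:collaborating 2} splits into two inequalities, and the plan is to dispatch them in order. For the second (easier) inequality, the strategy is direct: the definition of curvature gives $f_i(X)-f_i(X\setminus x)\ge (1-\kappa_i(|X|))f_i(x)$, so $\frac{f_i(X)-f_i(X\setminus x)}{f_i(x)}\ge 1-\kappa_i(|X|)\ge 1-\kappa(|X|)$ for every $i\in N$. Taking the minimum over $i$ and multiplying by $f(x)$ gives the claim. No subtlety here.

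The first inequality is the substantive one, and my plan is to construct a specific partition of $X$ from an optimal partition of $X\setminus x$. Concretely, let $(Y_1,\dots,Y_n)$ be an optimal partition of $X\setminus x$ realizing $f(X\setminus x)=\sum_{j}f_j(Y_j)$, and pick an index $i_0\in \argmax_{i\in N} f_i(x)$; note that by definition $f(x)=\max_i f_i(x)=f_{i_0}(x)$, since for a single element the trivial one-part partition is optimal. Inserting $x$ into the $i_0$-th block yields a feasible partition of $X$, so
\begin{align*}
f(X)\ge f_{i_0}(Y_{i_0}\cup x)+\sum_{j\ne i_0}f_j(Y_j)=f(X\setminus x)+\bigl(f_{i_0}(Y_{i_0}\cup x)-f_{i_0}(Y_{i_0})\bigr).
\end{align*}

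Because $Y_{i_0}\subseteq X\setminus x$ and $f_{i_0}$ is submodular (diminishing returns), the marginal $f_{i_0}(Y_{i_0}\cup x)-f_{i_0}(Y_{i_0})$ is at least $f_{i_0}(X)-f_{i_0}(X\setminus x)$. Combining, $f(X)-f(X\setminus x)\ge f_{i_0}(X)-f_{i_0}(X\setminus x)$. Dividing and multiplying by $f_{i_0}(x)=f(x)$ and bounding the resulting ratio by the minimum over $i$ yields
\begin{align*}
f(X)-f(X\setminus x)\ge \frac{f_{i_0}(X)-f_{i_0}(X\setminus x)}{f_{i_0}(x)}\,f(x)\ge \min_{i\in N}\frac{f_i(X)-f_i(X\setminus x)}{f_i(x)}\,f(x),
\end{align*}
which is the first inequality.

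The main obstacle is really conceptual rather than technical: one has to resist the tempting route of starting from an optimal partition of $X$ (which gives an upper, not lower, bound on $f(X)-f(X\setminus x)$) and instead start from an optimal partition of $X\setminus x$ so that extending by $x$ produces a genuine lower bound. The other delicate point is the choice $i_0\in\argmax_i f_i(x)$: this is exactly what lets the ratio be evaluated at the same buyer whose value at $\{x\}$ equals $f(x)$, so that the $f(x)$ factor is not lost when we pass from $f_{i_0}$ to the minimum. Once this choice is made, both inequalities follow mechanically.
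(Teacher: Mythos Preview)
Your proof is correct and follows essentially the same route as the paper: start from an optimal partition of $X\setminus x$, insert $x$ into the block of a buyer $i_0$ with $f_{i_0}(x)=f(x)$, use submodularity of $f_{i_0}$ to pass from the marginal on $Y_{i_0}$ to the marginal on $X$, and then bound by the minimum ratio; the second inequality is handled identically via the definition of curvature. If anything, you are slightly more explicit than the paper in calling out the diminishing-returns step that justifies $f_{i_0}(Y_{i_0}\cup x)-f_{i_0}(Y_{i_0})\ge f_{i_0}(X)-f_{i_0}(X\setminus x)$.
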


\section{Experiments}
\label{sec:experiments}

In this section, we present experimental results on our pricing algorithms
for a variant of budget allocation problem, which are described in Section~\ref{subsec:budgetallocation}.
All experiments were conducted on an Intel Xeon E5-2690
2.90GHz CPU (32 cores) with 256GB memory running Ubuntu 12.04.
All codes were implemented in Python 2.7.3.

We performed the following five experiments:
For the single advertiser case,
(1) we computed prices of each channel for a realistic dataset;
(2) we compared the proposed algorithm with other baseline algorithms;
(3) we evaluated the scalability of the proposed algorithm; and
(4) we observed the relationship between the activation probabilities and the number of allocated channels. 
For the multiple advertisers case and the multiple collaborating advertisers case,
(5) we observed the relationship between the obtained profit and the number of advertisers.


%
For these experiments, we used two random synthetic networks (\texttt{Uniform}, \texttt{PowerLaw}) and three networks constructed from real-world datasets (\texttt{Last.fm}, \texttt{MovieLens}, \texttt{BookCrossing}).
Throughout the experiments, we assume that the expected revenue from one loyal customer is $1$, i.e., $\gamma_i=1$.
The description of the datasets is given in Appendix.

\vspace{-1em}
\paragraph{(1) Typical result}

First, we ran Algorithm~\ref{alg:singlepricing}
to \texttt{Last.fm} dataset to compute prices for the musics played in Last.fm.
Top 10 frequently played musics and top 10 high price musics
are displayed in Table~\ref{tbl:rank_by_frequency} and Table~\ref{tbl:rank_by_price}, respectively.
We can observe that some musics with a large number of plays (or unique users) are not assigned high prices.
This occurs because of the stability condition.

\begin{table}[tb]
  \centering
  \caption{Ranking by \#plays.}
  \label{tbl:rank_by_frequency}
  \resizebox{\columnwidth}{!}{%
  \setlength{\tabcolsep}{2pt}
  \begin{tabular}{clll}
    rank & artist -- music & \#play & UU \\ \hline
    1 & The Postal Service -- Such Great Heights & 3992  & 321\\
    2 & Boy Division -- Love Will Tear Us Apart  & 3663  & 318\\
    3 & Radiohead -- Karma Police                & 3534  & 346\\
    4 & Muse -- Supermassive Black Hole          & 3483  & 263\\
    5 & Death Cab For Cutie -- Soul Meets Body   & 3479  & 233\\
    6 & The Knife -- Heartbeats                  & 3156  & 177\\
    7 & Muse -- Starlight                        & 3060  & 260\\
    8 & Arcade Fire -- Rebellion (Lies)          & 3048  & 292\\
    9 & Britney Spears -- Gimme More             & 3004  & 59\\
   10 & The Killers -- When You Were Young       & 2998  & 235\\
    \hline
  \end{tabular}
  }
\end{table}
\begin{table}[tb]
  \centering
  \caption{Ranking by prices.}
  \label{tbl:rank_by_price}
  \resizebox{\columnwidth}{!}{%
  \setlength{\tabcolsep}{2pt}
  \begin{tabular}{ccll}
    rank & original & artist -- music & price \\ \hline
    1 &  1 & The Postal Service -- Such Great Heights  & 3.330  \\
    2 &  8 & Arcade Fire -- Rebellion (Lies)           & 2.101  \\
    3 &  4 & Muse -- Supermassive Black Hole           & 2.029  \\
    4 & 11 & Interpol -- Evil                          & 2.026  \\
    5 &  3 & Radiohead -- Karma Police                 & 2.003  \\
    6 &  6 & The Knife -- Heartbeats                   & 1.992  \\
    7 & 12 & Kanye West -- Love Lockdown               & 1.893  \\
    8 & 17 & Arcade Fire -- Neighborhood \#1 (Tunnels) & 1.868  \\
    9 & 23 & Kanye West -- Heartless                   & 1.788  \\
   10 & 24 & Radiohead -- Nude                         & 1.770  \\
    \hline
  \end{tabular}
  }
\end{table}

\vspace{-1em}
\paragraph{(2) Comparison with other pricing algorithms}

\begin{table}[tb]
\caption{Comparison of pricing algorithms on several datasets. Each value is the ratio of the profit obtained by the algorithm and the proposed algorithm.}
\label{tbl:comparison}
\centering
\scalebox{0.75}{
\begin{tabular}{l|ccccc}
                   & Proposed & SellAll & Random & Scaled & Ascend  \\ \hline
\texttt{Uniform} & \textbf{1.00}     & 0.89    & 0.55   & 0.98   & 0.96    \\
\texttt{PowerLaw} & \textbf{1.00}     & 0.89    & 0.65   & 0.98   & 0.51    \\
\texttt{Last.fm} & \textbf{1.00}     & 0.71    & 0.46   & 0.99   & 0.78    \\
\texttt{MovieLens} & \textbf{1.00}     & 0.58    & 0.48   & 0.96  & 0.67    \\
\texttt{BookCrossing} & \textbf{1.00}     & \textbf{1.00}    & 0.39   & 0.78   & 0.43    \\
\hline
\end{tabular}
}
\end{table}

Next, we compared Algorithm~\ref{alg:singlepricing} with the following four baseline algorithms:
\begin{description} 
  \setlength{\parskip}{0pt}
  \setlength{\itemsep}{0pt}
\item[Selling all items.] Assign $X = V$ and price $p(v) = f(V) - f(V \setminus v)$ for each $v \in V$. This algorithm gives a stable assignment.
\item[Random pricing.] Price $p(v) \in [0, f(v)]$ uniformly at random for each $v \in V$ and find an assignment $X$ by the greedy algorithm.
\item[Scaled pricing.] Price $p(v) = \alpha f(v)$ for each $v$ and find an assignment $X$ by the greedy algorithm. $\alpha$ is chosen optimally from $\{0.1, \ldots, 1.0\}$.
\item[Ascending pricing.] Start from $X = V$ and $p(v) = 0 \ (v \in V)$, repeat the following process: Price $p(v) = \min_{X: x \in X}(f(X) - f(X \setminus x))$ for each $v \in X$, remove $\tilde x$ that attains the minimum from $X$, and then price $p(\tilde x) = +\infty$. This algorithm is motivated by the ascending auction~\cite{krishna2009auction}.
\end{description}
We remark that there are no existing algorithms that are directly applicable to the optimal pricing problem with submodular valuations (see also Section \ref{sec:relatedwork}). 

We used all the networks described above; we set $|V| = 100$, $|W| = 10000$, $d = 10$ and $q_{\text{max}} = 0.3$ for \texttt{Uniform} and \texttt{PowerLaw}.
The result is summarized in Table~\ref{tbl:comparison}.
The proposed algorithm outperforms all compared algorithms for all datasets

\vspace{-1em}
\paragraph{(3) Scalability}

We evaluated the scalability of the proposed algorithm.
We used \texttt{Uniform} with $|V| \in \{16, 32, \ldots, 1024\}$, $|W| \in \{100, 1000, 10000, 100000\}$, $d = 10$, and $q_{\text{max}} = 0.3$.
We also conducted the same experiment on \texttt{PowerLaw} but we omit it since it yields very similar results.

The result is shown in Figure~\ref{fig:scalability}.
The elapsed times were (roughly) proportional to both $|V|$ and $|W|$.
This is consistent with our analysis that the proposed algorithm runs in $O(|V| |E|)$ time, and the number of edges is proportional to $|W|$ for these networks.
Therefore, the proposed algorithm scales to moderately large networks.

\vspace{-1em}
\paragraph{(4) Number of allocated channels and activation probabilities}

We observe the relationship between activation probability and the obtained allocation.
We used \texttt{Uniform} and \texttt{PowerLaw} with the parameters $|V| = 100$,  $|W| = 10000$, and $d = 10$.
We controlled the maximum activation probability $q_{\text{max}} \in \{0.05, 0.10, \ldots, 0.95\}$ and observe the number of assigned marketing channels.

The result is shown in Figure~\ref{fig:probability}.
For both networks, when $q_{\text{max}}$ was small the proposed algorithm assigned all channels, and when $q_{\text{max}}$ was large it assigned a few channels.
The number of assigned channels decreased much faster in \texttt{PowerLaw} than in \texttt{Uniform}, since there were highly correlated marketing channels in \texttt{PowerLaw}.

\vspace{-1em}
\paragraph{(5) Profit and the number of advertisers}

Next, we conducted experiments on the multiple advertisers case and the multiple collaborating advertisers case.
Here we observe the relationship between the profit and the number of advertisers in these settings.
For these experiments, we modified \texttt{Uniform} and \texttt{PowerLaw}
to assign multiple probabilities $q_1(e), \ldots, q_n(e)$ for each edge, each of which follows the uniform distribution on $[0, q_{\text{max}}]$.

The result is shown in Figure~\ref{fig:advertisers}.
By comparing two results obtained by the multiple (non-collaborating) advertisers case, the number of advertisers had little influence on the profit.
On the other hand, by comparing the results obtained by collaborating advertisers, the profit increased when the number of advertisers increased.
Moreover, the profits obtained from the collaborating advertisers consistently outperformed those obtained from non-collaborating advertisers.
This means that collaboration of advertisers yields a better profit to the publisher.
Note that we could not observe the difference between \texttt{Uniform} and \texttt{PowerLaw}.

\begin{figure}[t]
\hspace{-1em}
\begin{tabular}{cc}
\begin{minipage}{0.495\hsize}
\centering
\resizebox{\hsize}{!}{\includegraphics{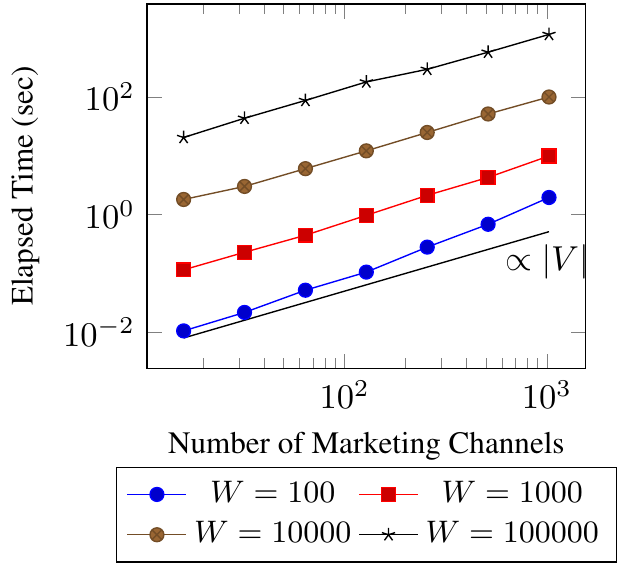}}
\caption{Scalability of the proposed algorithm.}
\label{fig:scalability}
\end{minipage}\quad
\begin{minipage}{0.495\hsize}
\centering
\resizebox{\hsize}{!}{\includegraphics{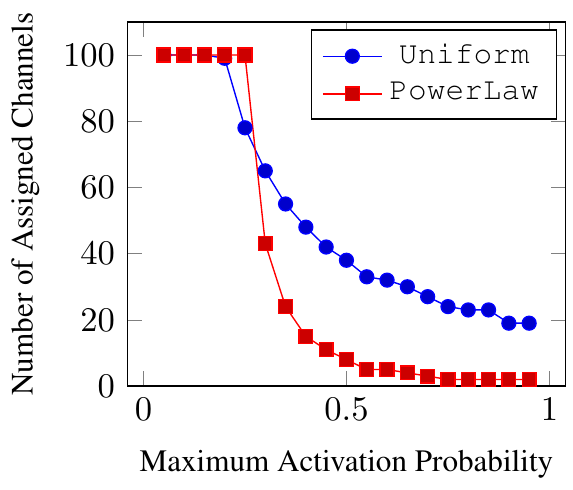}}
\caption{The number of assigned channels versus edge probability.}
\label{fig:probability}
\end{minipage}\quad
\end{tabular}
\end{figure}
\begin{figure}[t]
\centering
\resizebox{0.495\hsize}{!}{\includegraphics{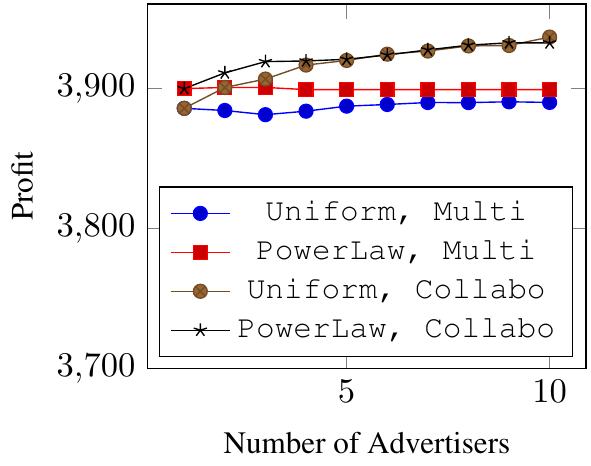}}
\caption{Profit versus the number of non-collaborating or collaborating advertisers.}
\label{fig:advertisers}
\vspace{-1em}
\end{figure}

\section{Conclusion}
\label{sec:conclusion}

We propose some future works.
One is to develop an approximate pricing algorithm for the case that multiple buyers have limited budgets.
Another one is to analyze the optimal pricing problem with multiple sellers.
In this study, we assumed there is one seller; who can be regarded as a monopolist.
The seller can select both the assignment and the price as long as they satisfy the stability condition.
This situation is highly advantageous for the seller.
Finally, in this study, we do not consider nonlinear or non-anonymous pricing.
It would be interesting in future work to analyze the effect of such generalizations of pricing.

\medskip

\noindent \textbf{Acknowledgments} This work was supported by JSPS KAKENHI Grant Number 16K16011, and JST, ERATO, Kawarabayashi Large Graph Project.

\bibliographystyle{aaai}
\bibliography{main}

\clearpage

\appendix

\section{Extension to limited budget cases}

In this section, we extend our algorithms for the optimal pricing problem with unlimited budget cases to ones for the limited budget cases.

First, suppose that there exists only one buyer who has a limited budget $B$. 
Since his payment is limited by this budget, the output $(p, X)$ of Algorithm \ref{alg:singlepricing} may fail of the budget constraint. 
Thus, we arbitrarily discounts the prices $p(x)$ for $x \in X$ so that $p(X) = B$ holds, and then returns $(p, X)$. 

\OMIT{
\begin{algorithm}[tb]
\caption{Pricing algorithm for a single buyer with a budget constraint.}
\begin{algorithmic}
  \FOR{$s = 1, 2, \ldots, |V|$}
  \STATE{Let $X^s$ be the largest $s$ elements of $f(x)$}
  \STATE{Price $p^s(x) = f(X^s) - f(X^s \setminus x)$ for all $x \in X^s$ and $p^s(y) = +\infty$ for all $y \in V \setminus X^s$}
  \ENDFOR
  \STATE{Select $p$ and $X$ that attains maximum of $p^s(X^s)$}
  \IF{$p(X) > B$}
  \STATE{Discount $p(x)$ \ ($x \in X$) arbitrary to $p(X) = B$}
  \ENDIF
  \STATE{\textbf{Return $p$ and $X$}}
\end{algorithmic}
\end{algorithm}
}

\begin{algorithm}[tb]
\caption{Pricing algorithm for a single buyer with a budget constraint.}
\label{alg:singlebudget}
\begin{algorithmic}
  \STATE{Let $p$ and $X$ be the outputs of Algorithm \ref{alg:singlepricing}}
  \STATE{\textbf{if} $p(X) > B$ \textbf{then} discount $p(x)$ \ ($x \in X$) arbitrary so that $p(X) = B$}
  \STATE{\textbf{Return $p$ and $X$}}
\end{algorithmic}
\end{algorithm}

\begin{theorem}
\label{thm:singlebudget}
Let $(p^*, X^*)$ be an optimal solution to \eqref{eq:problemmultiple},
and let $(p, X)$ be the solution obtained by Algorithm~\ref{alg:singlebudget}.
It holds that $X\in D(p)$ and \((1-\kappa(|X^*|))p^*(X^*)\le p(X)\).
\end{theorem}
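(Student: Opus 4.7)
The plan is to piggyback on the analysis of Algorithm~\ref{alg:singlepricing} and add a short direct argument for the discount step. Write $h(S) = \sum_{x \in S} (f(S) - f(S \setminus x))$, and let $(p', X')$ be the output of Algorithm~\ref{alg:singlepricing}, so that $p'(X') = h(X')$ and, by Lemma~\ref{lem:single_obs}, $X' \in D(p')$ in the unbudgeted sense. Algorithm~\ref{alg:singlebudget} returns $X = X'$ together with $p$ obtained by an arbitrary nonnegative discount of the $p'(x)$'s whose total equals $\max(h(X') - B, 0)$, so $p(X) = \min(h(X'), B)$ while $p(y) = +\infty$ on $V \setminus X$.

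First I would verify that $X \in D(p)$ under the budget. Any $Y \not\subseteq X$ has $p(Y) = +\infty > B$ and is infeasible. For $Y \subseteq X$, setting $\delta(x) = p'(x) - p(x) \ge 0$, one has $f(Y) - p(Y) = (f(Y) - p'(Y)) + \sum_{x \in Y} \delta(x) \le (f(X) - p'(X)) + \sum_{x \in X} \delta(x) = f(X) - p(X)$, where the inequality uses the unbudgeted stability of $(p', X')$. Hence $(p, X)$ is stable under the budget constraint.

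Next I would bound the optimal profit. Using the stability of $(p^*, X^*)$ against the alternative $X^* \setminus x$, which is budget-feasible since $p^*(X^* \setminus x) \le p^*(X^*) \le B$, we get $p^*(x) \le f(X^*) - f(X^* \setminus x)$; summing over $x \in X^*$ yields $p^*(X^*) \le h(X^*)$, and together with the budget we obtain $p^*(X^*) \le \min(h(X^*), B)$. From the proof of Theorem~\ref{thm:singlepricing}, for $X^{|X^*|}$ the $|X^*|$ items with largest singleton value, the curvature bound gives $h(X^{|X^*|}) \ge (1 - \kappa(|X^*|)) \sum_{x \in X^{|X^*|}} f(x) \ge (1 - \kappa(|X^*|)) \sum_{x \in X^*} f(x) \ge (1 - \kappa(|X^*|)) h(X^*)$, using submodularity in the last inequality; since $X'$ maximizes $h(X^s)$ over $s$, the same bound $h(X') \ge (1 - \kappa(|X^*|)) h(X^*)$ holds.

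To conclude, split on whether the budget binds. If $h(X') \le B$, then $p(X) = h(X') \ge (1 - \kappa(|X^*|)) h(X^*) \ge (1 - \kappa(|X^*|)) p^*(X^*)$. If $h(X') > B$, then $p(X) = B \ge p^*(X^*) \ge (1 - \kappa(|X^*|)) p^*(X^*)$, since $p^*$ respects the budget. The main technical point is the stability check in the discounted case, because the discount is described as arbitrary; the calculation above confirms that stability is preserved for every nonnegative discount summing to $h(X') - B$, which is what justifies the algorithm's freedom in choosing the discount.
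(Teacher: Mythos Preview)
Your proof is correct and follows essentially the same route as the paper: split on whether the discount step fires; in the non-discount case invoke the analysis of Theorem~\ref{thm:singlepricing}, and in the discount case observe that the profit equals $B \ge p^*(X^*)$ and verify stability by comparing the discounted prices to the undiscounted ones. Your stability calculation via $\delta(x)=p'(x)-p(x)\ge 0$ is equivalent to the paper's (which bounds $p'(X\setminus Y)$ by the undiscounted $p(X\setminus Y)$), and you are a bit more explicit than the paper in checking that $X^*\setminus x$ is budget-feasible before invoking the stability inequality.
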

\begin{proof}
If no discount has been performed, then we obtain the stability of $(p, X)$ and the claimed theoretical guarantee by the same analysis as Theorem~\ref{thm:singlepricing}. 

We assume that a discount has been performed. 
Let $(p, X)$ be the output of Algorithm \ref{alg:singlepricing}, and $p'$ be the price vector obtained by Algorithm \ref{alg:singlebudget}. 
The profit is $B$, and this is the optimal value. 
It remains to show that $X \in D(p')$. 
Let $Y$ be an arbitrary subset of $V$. 
If $Y \not\subseteq X$, then it holds that $f(X) - p'(X) > f(Y)-p'(Y) = - \infty$. 
Thus, we may assume that $Y \subseteq X$. 
It follows that
\begin{align*}
&(f(X) - p'(X)) - (f(Y) - p'(Y)) \\
&= f(X)-f(Y)-p'(X \setminus Y) \\
&\geq f(X)-f(Y)-p'(X \setminus Y)\\
&= (f(X) - p'(X)) - (f(Y) - p'(Y))\\
&\geq 0.
\end{align*}
Therefore, $X \in D(p')$ holds. 
This completes the proof.
\end{proof}

Suppose that there exist multiple collaborating buyers, who share a budget $B$. 
The total payment of buyers must not exceed $B$. 
The idea of Algorithm \ref{alg:singlebudget} works for this case, by executing Algorithm \ref{alg:collaboratepricing} instead of Algorithm \ref{alg:singlepricing}. 
Because the proof of Theorem \ref{thm:singlebudget} does not use the submodularity of $f$, we can derive the same result as Theorem \ref{thm:singlebudget} by a similar result using Theorem \ref{thm:collaborating} instead of Theorem \ref{thm:singlepricing}. 
\begin{corollary}
Let $(p^*, X^*)$ be an optimal solution to \eqref{eq:problemcollaborating},
and let $(p, X)$ be the solution obtained by Algorithm~\ref{alg:singlebudget}.
It holds that $X\in D(p)$ and \((1-\kappa(|X^*|))p^*(X^*)\le p(X)\).
\end{corollary}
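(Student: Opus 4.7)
The plan is to mimic the proof of Theorem~\ref{thm:singlebudget} verbatim, taking advantage of the fact that its argument never invoked submodularity of the underlying valuation, only the guarantee provided by the unlimited-budget algorithm and the monotonicity of $f$. Since Algorithm~\ref{alg:singlebudget} for the collaborating case simply replaces the call to Algorithm~\ref{alg:singlepricing} with a call to Algorithm~\ref{alg:collaboratepricing}, I would substitute Theorem~\ref{thm:collaborating} for Theorem~\ref{thm:singlepricing} wherever the original proof invoked the unlimited-budget guarantee.

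Concretely, let $(\tilde p, X)$ denote the intermediate output of Algorithm~\ref{alg:collaboratepricing} (before any discount), and let $(p, X)$ denote the final output after the possible discount. First I would handle the easy case: if $\tilde p(X) \le B$, then no discount is applied, so $p = \tilde p$, and Theorem~\ref{thm:collaborating} immediately gives both $X \in D(p)$ and $(1-\kappa(|X^*|))\, p^*(X^*) \le p(X)$. Next, in the case that a discount is performed, the algorithm produces $p$ with $p(X) = B$. Since $(p^*, X^*)$ is feasible for the limited-budget instance, $p^*(X^*) \le B = p(X)$, which is an even stronger bound than $(1-\kappa(|X^*|))\,p^*(X^*) \le p(X)$.

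It remains to verify stability $X \in D(p)$ after the discount. I would split any candidate $Y \subseteq V$ into two subcases. If $Y \not\subseteq X$, then $p(y) = +\infty$ for some $y \in Y \setminus X$, so $f(Y) - p(Y) = -\infty \le f(X) - p(X)$ trivially. If $Y \subseteq X$, I write
\begin{align*}
(f(X) - p(X)) - (f(Y) - p(Y)) = f(X) - f(Y) - p(X \setminus Y),
\end{align*}
and use $p \le \tilde p$ pointwise (discounting only lowers prices) to bound this below by $f(X) - f(Y) - \tilde p(X \setminus Y)$, which is nonnegative by the stability of $(\tilde p, X)$ guaranteed by Theorem~\ref{thm:collaborating}. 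This yields $X \in D(p)$.

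The only mildly delicate point will be ensuring the argument is valid even though the aggregated valuation $f$ is not necessarily submodular (as flagged in Example~\ref{ex:nonsubmodular}); but the proof only relies on monotonicity of $f$, the stability of the pre-discount pair, and the elementary fact that $p \le \tilde p$, so submodularity is not required. Thus no new obstacles arise beyond those already handled in Theorem~\ref{thm:singlebudget}, and the corollary follows.
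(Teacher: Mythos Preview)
Your proposal is correct and follows precisely the approach the paper indicates: the paper does not give a standalone proof of this corollary but simply remarks that the proof of Theorem~\ref{thm:singlebudget} never uses submodularity of $f$, so it carries over verbatim once Theorem~\ref{thm:collaborating} is substituted for Theorem~\ref{thm:singlepricing}. You have written out exactly that transplanted argument, including the key observations that (i) in the no-discount case the unlimited-budget guarantee applies directly, (ii) in the discount case $p(X)=B$ is already optimal, and (iii) stability survives the discount because $p\le\tilde p$ pointwise and the pre-discount pair was stable.
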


\section{Omitted results and proofs}

In this section, we provide omitted theorems and prove all results. 

\begin{proof}[Proof of Lemma~\ref{lem:curvaturebound}]
By a direct calculation, we have
\begin{align}
  \frac{f_i(X) - f_i(X \setminus x)}{f_i(x)}
  &= \frac{\displaystyle \sum_{w \in N(x)} q_i(x, w) (1 - Q_i(X \setminus x, w))}{\displaystyle \sum_{w \in N(x)} q_i(x, w)} \nonumber \\
  &\ge 1 - \max_{w \in N(x)} Q_i(X \setminus x, w),
\end{align}
where $N(x) = \{ w \mid (x, w) \in E \}$. 
By taking the minimum over $x \in X$ and $|X| = s$, we obtain this result.
\end{proof}

\begin{proof}[Proof of Theorem~\ref{thm:curvature}]
Since $q_i(e) \le q$, we have
\begin{align}
  Q_i(X \setminus x, w) \le 1 - (1 - q)^{\min\{|X|-1, \mathrm{deg}(w)-1\}}
\end{align}
for any $X$, $x\in X$, and $w$.
Thus we obtain the result.
\end{proof}

\begin{proof}[Proof of Lemma~\ref{lem:single_obs}]
Let $p$ be the price vector defined as in the statement. 
We first prove $X \in D(p)$.
Note that for any $Y \not\subseteq X$, it holds that $f(X) - p(X) > f(Y) - p(Y) = -\infty$ by definition of $p$. 
For any \(Y\subseteq X\) and \(x \in Y\), we have
\begin{align}
  f(Y) - p(Y)
  &= (f(Y) - f(Y\setminus x)) + f(Y\setminus x) - p(Y)\notag\\
  &\ge (f(X) - f(X\setminus x)) + f(Y\setminus x) - p(Y)\notag\\
  &= f(Y\setminus x) - p(Y\setminus x),
\end{align}
because $f$ is a submodular function and $p(x)=f(X) - f(X\setminus x)$.
Thus, it holds that \(f(X)-p(X)\ge f(Y)-p(Y)\) for all \(Y \subseteq X\), which means $X \in D(p)$.

Moreover, for any price vector $p'$ with $X \in D(p')$, we have 
\begin{align}
  f(X) - p'(X) \ge f(X \setminus x) - p'(X \setminus x)
\end{align}
for all $x \in X$. 
Thus, it holds that
\begin{align}
  p'(x) \le f(X) - f(X \setminus x) \quad (x \in X), 
\end{align}
and we have
\begin{align}
  p'(X)
  &\le \sum_{x \in X} (f(X) - f(X \setminus x)) = p(X)\notag.
\end{align}
Therefore, the lemma holds.
\end{proof}

\begin{proof}[Proof of Lemma~\ref{lem:Dtoh}]
Lemma \ref{lem:single_obs} implies that the maximum objective value of \eqref{eq:problemmultiple} for a fixed assignment $X$ is $\sum_{x \in X} (f(X) - f(X \setminus x))$. 
Thus for any optimal solution $(p^*, X^*)$ to \eqref{eq:problemmultiple}, $X^*$ attains \eqref{eq:problemsingle_eq}. 
Conversely, any assignment achieving \eqref{eq:problemsingle_eq} together with the price vector defined in Lemma \ref{lem:single_obs} is an optimal solution to \eqref{eq:problemmultiple}. 
This proves the lemma.
\end{proof}

\begin{proof}[Proof of Theorem~\ref{thm:NPhard}]
Let $\phi$ be an instance of the one-in-three positive 3-SAT problem with the set $V$ of variables and the set $W$ of clauses. 
We construct a bipartite graph $G=(V \cup W, E)$, where $G$ has an edge $(v, w) \in E$ if and only if variable $v$ is contained in clause $w$. 
We define $q(e)=1$ for all $e \in E$. 
Note that we have $\mathrm{deg}(w) = 3$ for all $w \in W$. 
Let $f$ be a submodular function defined by \eqref{eq:fQ} with $\gamma = 1$.

We observe that $f(X)$ is the number of right vertices covered by $X$, because we have $Q(X, w) = 1$ if $(x, w) \in E$ for some $x \in X$, and $Q(X, w) = 0$ otherwise. 
Thus, $h(X) = \sum_{x \in X} (f(X) - f(X \setminus x))$ is the number of right vertices that is covered by $X$ \emph{exactly once}. 
Therefore, it holds that $\phi$ is satisfiable if and only if there exists $X$ such that $h(X) = |W|$.
\end{proof}

\begin{proof}[Proof of Theorem~\ref{thm:exporacle}]
Let $X^* \subseteq V$ be a fixed assignment. We define $f$ by 
\begin{align}
  f(X) = \begin{cases}
    2|X|, & |X| < |X^*|, \\
    2|X^*| - 1, & |X| = |X^*| \text{ and } X \neq X^*, \\
    2|X^*|, & \text{otherwise}.
  \end{cases}
\end{align}
This function is a monotone nondecreasing submodular function.
For this function, the optimal solution to \eqref{eq:problemmultiple} is given by a price vector $p^*$ defined as
\begin{align}
  p^*(x) = \begin{cases}
    2, & x \in X^*, \\
    +\infty, & \text{otherwise,}
  \end{cases}
\end{align}
and $X^*$.
On the other hand, we need 
${|V| \choose |X^*|}$ 
oracle evaluations to distinguish $X^*$ and other $X$ with $|X| = |X^*|$.
\end{proof}

\begin{proof}[Proof of Theorem~\ref{thm:singlepricing}]
Since we have $X \in D(p)$ by Lemma \ref{lem:single_obs},
it suffices to prove that  \((1 - \kappa(|X^*|)) p^*(X^*) \le p(X)\).
Since $X^* \in D(p^*)$, we have
\begin{align}
  f(X^*) - p^*(X^*) \ge f(X^* \setminus x) - p^*(X^* \setminus x),
\end{align}
for all $x \in X^*$, which implies that 
\begin{align}
  p^*(x) \le f(X^*) - f(X^* \setminus x) \quad (x \in X^*).
\end{align}
Let $(p^s, X^s)$ be the solution produced by the algorithm for $s = |X^*|$.
Then we have
\begin{align}
  p^*(X^*) 
  &\le \sum_{x \in X^*} \left( f(X^*) - f(X^* \setminus x) \right) \nonumber \\
  &\le \sum_{x \in X^*} f(x) \le \sum_{x \in X^s} f(x) \nonumber \\
  &\le \frac{1}{1 - \kappa(s)} \sum_{x \in X^s} \left( f(X^s) - f(X^s \setminus x) \right) \nonumber \\
  &= \frac{1}{1 - \kappa(s)} p^s(X^s) \le \frac{1}{1 - \kappa(s)} p(X),
  \end{align}
  which proves the theorem.
\end{proof}

\begin{remark}
Since many greedy algorithms usually chooses an item maximizing the marginal gain at each step, one may think of such a greedy algorithm for the optimal pricing problem (with a single buyer): starting with $X=\emptyset$,
repeatedly add the item $x\in V\setminus X$ maximizing the marginal gain to $X$ (i.e., $f(X\cup\{x\})-f(X)$) while the gain is positive,
and then outputs $(p, X)$ where $p$ is defined by using Lemma \ref{lem:single_obs}. 
However, we only obtained an approximation guarantee of $(1-\kappa(|X|))^2$, which is worse than the guarantee of Algorithm~\ref{alg:singlepricing}. 
\end{remark}

\begin{theorem}
\label{thm:muchworse}
There exists a monotone nondecreasing submodular function $f:2^V\to\mathbb{R}$ such that 
\begin{align}
\frac{\max_{X} f(X)}{\max_{X} h(X)}\ge H(|V|),
\end{align}
where \(H(k)\) is the $k$th harmonic number (i.e., \(H(k)=\sum_{i=1}^k 1/i\)).
\end{theorem}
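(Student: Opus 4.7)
The plan is to exhibit an explicit cardinality-based submodular function achieving the desired gap. Specifically, I would take $V = \{1,\dots,n\}$ and define $f(X) = H(|X|) = \sum_{i=1}^{|X|} 1/i$. Since $H$ is a strictly concave, monotone nondecreasing function with $H(0)=0$, the induced set function is monotone nondecreasing, satisfies $f(\emptyset)=0$, and is submodular: for $X\subseteq Y$ and $x\in X$, the marginal gain equals $H(|X|)-H(|X|-1) = 1/|X| \ge 1/|Y| = H(|Y|)-H(|Y|-1)$, which is exactly the diminishing returns property.

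Next I would compute both sides of the ratio. On the numerator, monotonicity gives $\max_X f(X) = f(V) = H(|V|)$. On the denominator, for any nonempty $X$ and any $x\in X$ we have $f(X)-f(X\setminus x) = 1/|X|$, so
\begin{equation*}
h(X) \;=\; \sum_{x\in X}\bigl(f(X)-f(X\setminus x)\bigr) \;=\; |X|\cdot \frac{1}{|X|} \;=\; 1,
\end{equation*}
and $h(\emptyset)=0$. Hence $\max_X h(X) = 1$, and the ratio equals $H(|V|)$, which proves the claim.

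There is essentially no obstacle here: the key insight is just to choose $f$ so that each singleton marginal $f(X)-f(X\setminus x)$ scales like $1/|X|$, which makes $h$ constant in $|X|$ while $f$ itself grows logarithmically. Any cardinality-based $f(X)=g(|X|)$ with $g$ concave, $g(0)=0$, and $k\,(g(k)-g(k-1))$ bounded while $g(n)\to\infty$ would give an analogous separation; the harmonic choice is merely the cleanest one matching the bound $H(|V|)$ in the statement.
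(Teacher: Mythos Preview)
Your proposal is correct and takes exactly the same approach as the paper: both use the cardinality-based function $f(X)=H(|X|)$ and observe that $h(X)=1$ for every nonempty $X$ while $\max_X f(X)=H(|V|)$. Your write-up is actually more detailed, since you explicitly verify the diminishing returns property and handle $h(\emptyset)$.
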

\begin{proof}
  Let \(f(X)=H(|X|)\).
  Then, \(\max_{X}f(X)=f(|V|)=H(|V|)\) and
  \(h(X)=\sum_{x\in X}(f(X)-f(X\setminus x))=1\) for all \(X\).
 Thus, we have obtained our proposition.
\end{proof}

\begin{theorem}
\label{thm:worsebutcurvature}
For any monotone nondecreasing submodular function $f:2^V\to\mathbb{R}$ with curvature $\kappa$,
we have 
\begin{align}
\frac{\max_{X} f(X)}{\max_{X} h(X)}\le \frac{1}{1-\kappa(|V|)}.
\end{align}
\end{theorem}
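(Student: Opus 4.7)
The plan is to exhibit a single set, namely $X=V$, whose $h$-value is already at least $(1-\kappa(|V|))\max_X f(X)$; the proof then reduces to combining the curvature bound with the subadditivity implied by submodularity.

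First I would use monotonicity of $f$ to note that $\max_X f(X)=f(V)$, so it suffices to prove $h(V)\ge (1-\kappa(|V|))f(V)$, and then conclude via $\max_X h(X)\ge h(V)$. Next, I would apply the definition of the curvature $\kappa(\cdot)$ at $s=|V|$: for every $x\in V$,
\[
f(V)-f(V\setminus x)\ge (1-\kappa(|V|))f(x).
\]
Summing this over $x\in V$ gives
\[
h(V)=\sum_{x\in V}\bigl(f(V)-f(V\setminus x)\bigr)\ge (1-\kappa(|V|))\sum_{x\in V}f(x).
\]

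The remaining ingredient is the elementary inequality $\sum_{x\in V}f(x)\ge f(V)$. This follows from submodularity together with $f(\emptyset)=0$: enumerating $V=\{x_1,\dots,x_m\}$ and writing $f(V)$ as a telescoping sum of marginals $f(\{x_1,\dots,x_i\})-f(\{x_1,\dots,x_{i-1}\})$, the diminishing returns property bounds each marginal by $f(x_i)-f(\emptyset)=f(x_i)$. Plugging this into the previous display yields $h(V)\ge (1-\kappa(|V|))f(V)$, from which the claimed inequality
\[
\frac{\max_X f(X)}{\max_X h(X)}\le \frac{1}{1-\kappa(|V|)}
\]
follows at once.

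There is no real obstacle here; the only subtlety is making sure to invoke curvature at $s=|V|$ (the largest value, which is the right one to bound the worst-case ratio) and to justify the subadditivity step cleanly from submodularity and $f(\emptyset)=0$. Both steps are one-line consequences of the hypotheses, so the entire argument is short.
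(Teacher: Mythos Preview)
Your proof is correct and follows essentially the same route as the paper: both reduce to the set $V$, use the curvature inequality at $s=|V|$ to get $h(V)\ge(1-\kappa(|V|))\sum_{x\in V}f(x)$, and combine this with the subadditivity bound $f(V)\le\sum_{x\in V}f(x)$ coming from submodularity and $f(\emptyset)=0$.
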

\begin{proof}
This proof can be done in a straightforward manner
\begin{align}
  \frac{\max_{X} f(X)}{\max_{X} h(X)}
  &\le \frac{f(V)}{h(V)}\notag\\
  &\le \frac{\sum_{x\in V}f(x)}{(1-\kappa(|V|))\sum_{x\in V}f(x)}\notag\\
  &= \frac{1}{1-\kappa(|V|)}.
\end{align}
\end{proof}


\begin{proof}[Proof of Theorem~\ref{thm:stablepricing}]
  We give a reduction from the \emph{exact cover by 3-sets problem (X3C)}, which is an NP-complete problem~\cite{garey1979computers}.
  In this problem, we are given a set $E$ with $|E|=3l$ and a collection $\mathcal{C}=\{C_1,\dots,C_m\}$ of 3-element subsets of $E$,
  and ask whether or not $\mathcal{C}$ contains an exact cover for $E$, i.e., an exact cover is 
  a subcollection $\mathcal{C}'\subseteq \mathcal{C}$ such that every element for $E$ occur in exactly one member of $\mathcal{C}'$.

  Let $n=2$, $V=\{0,1,\dots,m\}$, $C_0=E$,
  $X_1=\{0\}$, $X_2=\{1,\dots,m\}$, and $f_1(X)=|\bigcup_{i\in X}C_i|$.
  Let $f_2(X)=|X|$ if $0\not\in X$ and $f_2(X)=|X|+l$ otherwise.
  Note that $f_1$ is monotone submodular and $f_2$ is monotone linear.

  We first claim that if there exists a stable price vector,
  the following price vector $p^*$ is also stable; $p^*(0)=l+1$ and $p^*(i)=1$ for $i\in X_2$.
  Since $f_2$ is linear, the condition \(X_2\in D_2(p)\) holds if and only if $p(0)\le l+1$ and $p(i)\ge 1$ for $i\in X_2$.
  In addition, if \(X_1\in D_1(p)\) holds for some price vector $p$,
  \(X_1\in D_1(p')\) also holds for any price vector \(p'\) such that $p'(0)\ge p(0)$ and $p'(i)\le p(i)$ for $i\in X_2$.
  Thus, the assignment $X_1,X_2$ has a stable price vector if and only if \(X_1\in D_1(p^*)\).
  Note that \(f_1(X_1)-p^*(X_1)=3l-(l+1)=2l-1\).

  Next, we show that \(X_1\not\in D_1(p^*)\) holds if and only if 
  there exists a solution for the X3C problem.
  Assume that there exists an exact cover $\mathcal{C}'\subseteq \mathcal{C}$ for $E$.
  Let \(X=\{i\mid C_i\in\mathcal{C}'\}\).
  Then, \(|X|=l\) by the definition of \(\mathcal{C}'\) and \(f_1(X)-p^*(X)=3l-|X|=2l\).
  Thus, we have \(X_1\not\in D_1(p^*)\).
  Conversely, assume that \(X_1\not\in D_1(p^*)\) and \(X_1'\in D_1(p^*)\).
  Then, \(f_1(X_1')-p^*(X_1')>2l-1\).
  Here, it is not difficult to see that \(f_1(X)-p^*(X)>2l-1\) holds only when \(0\not\in X\), \(|\sum_{i\in X}C_i|=3l\), and \(|X|=l\).
  Hence, \(\{C_i\mid i\in X\}\) is a solution for the X3C problem.

  Therefore, the assignment $X_1,X_2$ has a stable price vector
  if and only if there exists no solution for
  the X3C problem.
  This implies the coNP-hardness of the problem.  
\end{proof}

\begin{theorem}
\label{thm:NPhardmultiple}
  It is NP-hard to determine,
  for a given price vector $p$,
  the existence of an assignment \(X_1,\dots,X_n\) such that
  $X_i\in D_i(p)$ for all $i\in N$.
\end{theorem}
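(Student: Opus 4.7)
The plan is to reduce from the NP-complete exact cover by 3-sets problem (X3C), adapting the construction from the proof of Theorem~\ref{thm:stablepricing} so that the hardness lands on the ``assignment given price'' side. Given an X3C instance with ground set $E$, $|E|=3l$, and collection $\mathcal{C}=\{C_1,\ldots,C_m\}$ of 3-element subsets, I construct a two-buyer instance on $V=\{0,1,\ldots,m\}$. Setting $C_0:=E$, I define $f_1(X)=|\bigcup_{i\in X} C_i|$ (a coverage function, hence monotone submodular) and $f_2(X)=(l+1)\cdot [0\in X]$ (modular, trivially submodular), and fix the price vector by $p(0)=l$ and $p(i)=1$ for $i\neq 0$. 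The reduction is clearly polynomial.

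The core step is to analyze the demand sets at this $p$. A direct calculation shows $D_2(p)=\{\{0\}\}$, since the unique utility-maximizing bundle for buyer~$2$ is $\{0\}$ (value~$1$). For buyer~$1$, I split on whether $0\in X$: the case $0\in X$ yields utility $2l+1-|X|$, uniquely maximized at $X=\{0\}$ with value~$2l$; the case $0\notin X$ yields utility $|\bigcup_{i\in X} C_i|-|X|$, bounded by $\min(3|X|,3l)-|X|$. A case split on whether $|X|$ is below, equal to, or above $l$ shows that this bound attains $2l$ exactly on those $X$ that index exact covers of $E$, and is strictly less than $2l$ otherwise. Thus $D_1(p)=\{\{0\}\}\cup \mathcal{X}$, where $\mathcal{X}$ is the (possibly empty) family of exact-cover index sets.

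Since any stable assignment forces $X_2=\{0\}$ and hence $0\notin X_1$, buyer~$1$ must select a bundle from $\mathcal{X}$, which is possible if and only if the X3C instance has a solution. This equivalence yields NP-hardness. The main technical obstacle is the tight analysis of $D_1(p)$ for $0\notin X$: one must verify both that the upper bound $2l$ is attained only on exact-cover index sets, and that it is strictly less than $2l$ whenever no exact cover exists. This is precisely the place where the hardness of X3C is absorbed into the demand-compatibility decision problem.
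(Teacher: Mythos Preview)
Your reduction is correct: the coverage function $f_1$ and the modular $f_2$ are monotone submodular with $f_i(\emptyset)=0$, the computation of $D_2(p)=\{\{0\}\}$ and of $D_1(p)=\{\{0\}\}\cup\mathcal{X}$ is sound (the case split on $|X|\lessgtr l$ for $0\notin X$ is tight precisely because each $C_i$ has exactly three elements), and disjointness of $(X_1,X_2)$ forces $X_1\in\mathcal{X}$, giving the desired equivalence with X3C.

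However, your route differs from the paper's. The paper reduces from \textsc{Partition} instead: given integers $a_1,\dots,a_m$ summing to $2T$, it takes $V=\{1,\dots,m\}$, the all-zero price vector $p\equiv 0$, and symmetric buyers $f_1=f_2$ with $f_i(X)=\min\{T,\sum_{x\in X}a_x\}$. Then $D_1(p)=D_2(p)=\{X:\sum_{x\in X}a_x\ge T\}$, and a disjoint pair in this common demand set exists iff the \textsc{Partition} instance is a yes-instance. That argument is shorter and more symmetric (identical buyers, trivial prices, no auxiliary element~$0$), whereas your construction recycles the X3C machinery already built for Theorem~\ref{thm:stablepricing}; the payoff of your approach is conceptual economy across the two hardness results, at the cost of a slightly longer demand-set analysis here.
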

\begin{proof}[Proof of Theorem~\ref{thm:NPhardmultiple}]
  We show that the \emph{partition} problem can be reduced to the problem.
  Here the partition problem is the following NP-complete problem~\cite{garey1979computers}:
  given $m$ positive integers $a_1,\dots,a_m$ with \(\sum_{i=1}^n a_i=2T\),
  determine whether there exists a subset of $I\subseteq \{1,\dots,n\}$
  such that \(\sum_{i\in I}a_i=T\).
  Let $n=2$, $V=\{1,\dots,m\}$, $p(x)=0~(\forall x\in V)$, and $f_1(X)=f_2(X)=\min\{T,\sum_{i\in X}a_i\}$.
  Note that $f_1$ and $f_2$ are monotone submodular functions.
  Then, \(D_1(p)=D_2(p)=\{X\subseteq V\mid \sum_{x\in X}a_x\ge T\}\).
  Thus, there exists an assignment \(X_1,X_2\) such that \(X_1\in D_1(p)\) and \(X_2\in D_2(p)\)
  if and only if there exists a desired partition \(I\).
  Therefore, we obtain the theorem.
\end{proof}

\begin{proof}[Proof of Theorem~\ref{thm:multiplepricing}]
  We first prove the $(1-\kappa(s))$-stable condition
  by showing that \(f_i(X_i)-p(X_i)\ge 0\ge (1-\kappa_i(s))f_i(X)-p(X)\) holds for any \(X\subseteq V\) and \(i\in N\).
  The left inequality holds by
\begin{align}
  f_i(X_i) - p(X_i) 
  &= f_i(X_i) - \sum_{x \in X_i} (f_i(X^s) - f_i(X^s \setminus x)) \nonumber \\
  &\ge f_i(X_i) - \sum_{x \in X_i} (f_i(X_i) - f_i(X_i \setminus x)) \nonumber \\
  &\ge 0,
\end{align}
and we show the right inequality.
If $X \not \subseteq X^s = X_1 \cup \cdots \cup X_n$ then $(1-\kappa_i(s))f_i(X) - p(X) = -\infty$.
Thus, we can assume that \(X\subseteq X^s\) and we have
\begin{align}
  f_i(X) - p(X) 
  &= f_i(X) - \sum_{x \in X} \max_j (f_j(X^s) - f_j(X^s \setminus x)) \nonumber \\
  &\le f_i(X) - \sum_{x \in X} (f_i(X^s) - f_i(X^s \setminus x)) \nonumber \\
  &\le f_i(X) - (1 - \kappa_i(s)) \sum_{x \in X} f_i(x) \nonumber \\
  &\le f_i(X) - (1 - \kappa_i(s)) f_i(X) \nonumber \\
  &= \kappa_i(s) f_i(X).
\end{align}
Therefore, we have
\begin{align}
  \left(1 - \kappa_i(s)\right) f_i(X) - p(X) \le 0,
\end{align}
and \((p,\mathbf{X})\) is $(1-\kappa(s))$-stable.

Next, we show the approximate profit guarantee.
Since \((p^*,\mathbf{X}^*)\) is stable,
we have
\begin{align}
  p^*(X_i^*) 
  &\le 
  \sum_{x \in X_i^*} (f_i(X_i^*) - f_i(X_i^* \setminus x))
  \le \sum_{x \in X_i^*} f_i(x).
\end{align}
Thus, we obtain
\begin{align}
  \sum_{i\in N} p^*(X_i^*)
  &\le \sum_{i\in N} \sum_{x \in X_i^*} f_i(x) \nonumber \\
  &\le \sum_{i\in N} \sum_{x \in X_i^*} \max_{j\in N} f_j(x) \nonumber \\
  &\le \sum_{x \in X^{s^*}} \max_{j\in N} f_j(x) \nonumber \\
  &\le \sum_{x \in X^{s^*}} \max_{j\in N} \frac{f_j(X^{s^*})-f_j(X^{s^*}\setminus x)}{1-\kappa(s^*)} \nonumber \\
  &\le \sum_{x \in X^{s^*}} \frac{f_i(X^{s^*})-f_i(X^{s^*}\setminus x)}{1-\kappa(s^*)} \nonumber \\
  &= \frac{p(X^{s^*})}{1-\kappa(s^*)} \le \frac{p(X)}{1-\kappa(s^*)}.
\end{align}
\end{proof}

\begin{example}
\label{ex:nonsubmodular}
\begin{table}[ht]
\centering
\caption{An example that the aggregate valuation function is not submodular.}
\label{tab:nonsubmodular}
{\tabcolsep = 1.5mm
\begin{tabular}{c|ccccccc}
\hline
      & $\{a\}$ & $\{b\}$ & $\{c\}$ & $\{a,b\}$ & $\{b,c\}$ & $\{a,c\}$ & $\{a,b,c\}$\\\hline
$f_1$ & $1$     & $1$     & $1$     & $2$       & $1$       & $2$       & $2$  \\\hline
$f_2$ & $2$     & $2$     & $2$     & $3$       & $4$       & $3$       & $4$  \\\hline
$f$   & $2$     & $2$     & $2$     & $3$       & $4$       & $3$       & $5$  \\\hline
\end{tabular}
}
\end{table}

Let $f_1,f_2$ be valuation functions
and $f$ be the aggregated valuation function
defined as in Table \ref{tab:nonsubmodular},
Then, $f_1$ and $f_2$ are submodular but $f(\{a,b\}) + f(\{c,a\}) \ge f(\{a,b,c\}) + f(\{a\})$ does not hold.
\end{example}

\begin{proof}[Proof of Lemma~\ref{lem:collaborating_obs}]
Let $p$ be the price vector defined as the statement. 
We first prove $X \in D(p)$.
Note that for any $Y \not\subseteq X$, it holds that $f(X) - p(X) > f(Y) - p(Y) = -\infty$ by definition of $p$. 
For any \(Y\subseteq X\) and \(x\in Y\), we have
\begin{align}
  &(f(Y)-p(Y))-(f(Y\setminus x)-p(Y\setminus x))\notag\\
  &= (f(Y) - f(Y\setminus x)) - p(x)\ge 0,
\end{align}
by the definition of \(p(x)\).
Thus, it holds that \(f(X)-p(X)\ge f(Y)-p(Y)\) for all \(Y \subseteq X\), which means $X \in D(p)$.

Moreover, for any price vector $p'$ with $X \in D(p')$, we have 
\begin{align}
  f(Y) - p'(Y) \ge f(Y \setminus x) - p'(Y \setminus x)
\end{align}
for all $Y\subseteq X$ and $x \in Y$. 
Thus, it holds that
\begin{align}
  p'(x) \le f(Y) - f(Y \setminus x),
\end{align}
and we have
\begin{align}
  p'(X)
  &\le \sum_{x \in X} \min_{Y\subseteq X: x\in Y}(f(Y) - f(Y \setminus x)) = p(X).
\end{align}
Therefore, the lemma holds.
\end{proof}

\begin{proof}[Proof of Lemma~\ref{lem:collaborating 1}]
Let $(X^*_1, \ldots, X^*_n)$ (resp., $(X'_1, \ldots, X'_n)$) be the sets which form a partition of $X$ (resp., $X \setminus x$) attaining the maximum in the aggregated valuation function. 
Suppose that $x \in X_j^*$.
By replacing $X_i'$ with $X^*_i \setminus x$ for $i\in N$, we have
\begin{align}
  f(X) - f(X \setminus x) 
  & = \sum_{i\in N} f_i(X^*_i) - \sum_{i\in N} f_i(X'_i) \nonumber\\
  &\le f_j(X^*_j) - f_j(X^*_j \setminus x) \nonumber \\
                          &\le f_j(x)  \le \max_j f_j(x)=f(x).
\end{align}
This proves the lemma.
\end{proof}

\begin{proof}[Proof of Lemma~\ref{lem:collaborating 2}]
Let $X'_1, \ldots, X'_n$ be the sets which form a partition of $X \setminus x$ 
satisfying $f(X \setminus x) = \sum_{i\in N} f_i(X'_i)$.
Suppose that \(f_j(x)= f(x)\).
By setting $X_j = X_j' \cup \{x\}$ and $X_k = X_k'$ for $k \neq j$, we have
\begin{align}
  f(X) - f(X \setminus x) 
  & \ge \sum_{i\in N} f_i(X_i) - \sum_{i\in N} f_i(X'_i) \nonumber\\ 
  &\ge f_j(X_j) - f_j(X_j\setminus x) \nonumber \\
  &= \frac{f_j(X_j) - f_j(X_j\setminus x)}{f_j(x)}f(x) \nonumber \\
  &\ge \min_{i\in N}\frac{f_i(X) - f_i(X\setminus x)}{f_i(x)}f(x) \nonumber \\
  &\ge (1-\max_{i\in N}\kappa_i(|X|))f(x) \nonumber \\
  &= (1-\kappa(|X|))f(x),
\end{align}
which proves the lemma.
\end{proof}

\begin{proof}[Proof of Theorem \ref{thm:collaborating}]
  First, we show the stable condition $X\in D(p)$, i.e.,
  \(f(X)-p(X)\ge f(Y)-p(Y)\) for any $Y\subseteq V$. 
  By definition of $p$, we may only consider $Y \subseteq X$.
  For any $x \in Y$, we have
\begin{align}
  &(f(Y) - p(Y)) - (f(Y \setminus x) - p(Y \setminus x)) \nonumber \\
  &=(f(Y) - f(Y \setminus x)) - p(x) \nonumber \\
  &\ge\min_{i\in N}\frac{f_i(Y)-f_i(Y\setminus x)}{f_i(x)}f(x) - p(x) \nonumber \\
  &\ge\min_{i\in N}\frac{f_i(X)-f_i(X\setminus x)}{f_i(x)}f(x) - p(x) = 0,
\end{align}
where the first inequality holds by Lemma \ref{lem:collaborating 2}.
Thus, we have \(X\in D(p)\).

Next, we show the approximation guarantee. 
Since $(p^*, X^*)$ satisfies $X^* \in D(p^*)$, we have
\begin{align}
  f(X^*) - p^*(X^*) \ge f(X^* \setminus x) - p^*(X \setminus x) \quad (x \in X^*).
\end{align}
Therefore, it holds that
\begin{align}
  p^*(x) \le f(X^*) - f(X^* \setminus x).
\end{align}
for all $x \in X^*$.
We denote $s = |X^*|$, and let $(p^s, X^s)$ be the solution produced by the algorithm for $s$.
We then have
\begin{align}
  p^*(X^*) 
  &\le \sum_{x \in X^*} \left( f(X^*) - f(X^* \setminus x) \right) \nonumber \\
  &\le \sum_{x \in X^*} f(x) \le \sum_{x \in X^s} f(x) \nonumber \\
  &= \frac{p^s(X^s)}{\min_{i\in N}\frac{f_i(X^s)-f_i(X^s\setminus x)}{f_i(x)}} \nonumber\\
  &\le \frac{p^s(X^s)}{1 - \kappa(s)}  \le \frac{p(X)}{1 - \kappa(s)}, 
\end{align}
where the second and the forth inequalities follow from Lemma \ref{lem:collaborating 1} and Lemma \ref{lem:collaborating 2}, respectively. 
This shows the theorem.
\end{proof}

\section{Description of datasets used in the paper}

The following is a description of datasets.

\begin{description}[font=\tt,leftmargin=10pt]
  \setlength{\parskip}{0pt}
  \setlength{\itemsep}{0pt}
  \item[Uniform] is a random bipartite network with $|V|$ left vertices and $|W|$ right vertices, where the degree of right vertices are constant ($= d$) and the degree distribution of the left vertices follows the uniform distribution. We set the activation probability $q(e) \in [0, q_{\text{max}}]$. We control these parameters to observe the performance of the algorithm.

  \item[PowerLaw] is similar to \texttt{Uniform}. The only difference is that the degree distribution of the left vertices follows a power-law distribution with exponent $2$.
\end{description}
\begin{description}[font=\tt,leftmargin=10pt]
  \item[Last.fm] is constructed from Last.fm 1K Dataset\footnote{\url{http://www.dtic.upf.edu/~ocelma/MusicRecommendationDataset/lastfm-1K.html}}, which contains 992 users' listening log on songs.
    We select top 1,000 frequently played songs and construct a bipartite graph, which has 1,000 left vertices (songs) and 986 right vertices (users) with 1,832,088 (multiple) edges.
    We set activation probability as $q(e) = 0.01$.

  \item[MovieLens] is constructed from MovieLens 10M Dataset\footnote{\url{http://files.grouplens.org/datasets/movielens/ml-10m-README.html}}, which contains 100,00,054 ratings ($1$ to $5$) to 10,681 movies by 71,567 users.
    We select top 1,000 frequently rated movies and construct a bipartite graph, which has 1,000 left vertices (movies) and 10,585 right vertices (users) with 1,357,805 edges.
    We set activation probability as $q(e) = 0.02 \times \text{rating}$.

  \item[BookCrossing] is constructed from Book-Crossing dataset\footnote{\url{http://www2.informatik.uni-freiburg.de/~cziegler/BX/}}, which contains 
    1149780 ratings (0 to 10) for 271,379 books given by 278,858 users.
    We select top 1,000 frequently rated books and constructed the network, which has 1,000 left vertices and 35,634 right vertices with 162,767 edges.
    We set activation probability as $q(e) = 0.01 \times \text{rating} + 0.01$.
\end{description}

\end{document}